\documentclass[twocolumn]{autart}
\usepackage[cmex10]{amsmath}
\usepackage{graphics}
\usepackage{epsfig}
\usepackage{color}
\usepackage{amsfonts}
\usepackage{epstopdf}
\usepackage{amssymb}

\usepackage{mathptmx}
\DeclareMathOperator*{\argmin}{arg\,min}
\usepackage{ntheorem}
\usepackage{subcaption}
\theoremstyle{plain}
\newtheorem{assm}{Assumption}
\newtheorem{clam}{Claim}
\newtheorem{lema}{Lemma}
\newtheorem{thme}{Theorem}
\newtheorem{remark}{Remark}
\newenvironment*{proof}{{\it Proof.}}{\hfill $\square$\par}

\begin{document}

\begin{frontmatter}

\title{Safe Adaptive Control of Parabolic PDE-ODE Cascades\thanksref{footnoteinfo}}

\thanks[footnoteinfo]{The material in this paper was not presented at any conference.}

\author[XMU]{Yun Jiang}\ead{34520241151611@stu.xmu.edu.cn} and   
\author[XMU]{Ji Wang}\ead{jiwang@xmu.edu.cn}  
\address[XMU]{Department of Automation, Xiamen University, Xiamen 361005, China}                                              
\begin{keyword}                          
Safe Adaptive Control; Parabolic PDE; Control Barrier Function; Backstepping.                
\end{keyword}                         

\begin{abstract}                        
In this paper, we propose a safe adaptive boundary control strategy for a class of parabolic partial differential equation–ordinary differential equation (PDE–ODE) cascaded systems with parametric uncertainties in both the PDE and ODE subsystems. The proposed design is built upon an adaptive Control Barrier Function (aCBF) framework that incorporates high-relative-degree CBFs together with a batch least-squares identification (BaLSI)–based adaptive control that guarantees exact parameter identification in finite time. The proposed control law ensures that: (i) if the system output state initially lies within a prescribed safe set, safety is maintained for all time; otherwise, the output is driven back into the safe region within a preassigned finite time; and (ii) convergence to zero of all plant states is achieved. Numerical simulations are provided to demonstrate the effectiveness of the proposed approach.
\end{abstract}

\end{frontmatter}

\section{Introduction}
\subsection{Boundary Control of Parabolic PDEs}
Parabolic partial differential equations (PDEs) are widely employed to model dynamics in fluid flow, heat transfer, and chemical processes, with applications ranging from sea ice melting and freezing \cite{koga2018control} to continuous steel casting \cite{petrus2012enthalpy} and lithium-ion battery systems \cite{koga2021state}, \cite{tang2017state}. These applications naturally lead to significant control and estimation problems for parabolic PDEs.
The backstepping approach \cite{krstic2008boundary} has been recognized as a powerful method for boundary stabilization/estimation of PDEs. The backstepping boundary control for parabolic PDEs was proposed in \cite{liu2000backstepping} and \cite{liu2003boundary}. Since then, numerous advancements in boundary control and estimation of parabolic PDEs have been made over the past two decades, including \cite{ahmed2016adaptive}, \cite{ahmed2015adaptive}, \cite{baccoli2015boundary}, \cite{deutscher2015backstepping}, \cite{meurer2009tracking}, \cite{orlov2017output}, \cite{pisano2012boundary}, \cite{vazquez2008control}, \cite{vazquez2008analysis}, \cite{vazquez2019boundary} and \cite{wang2025deep}.
Besides the aforementioned studies on parabolic PDEs, there has been significant interest in parabolic PDE-ODE coupled systems, which arise in various physical contexts, including coupled electromagnetic, mechanical, and chemical reactions \cite{tang2011state}. The backstepping stabilization of parabolic PDEs coupled with linear ODEs was initially explored in \cite{krstic2009compensating} for Dirichlet-type boundary interconnections, while results for Neumann boundary interconnections were later detailed in \cite{susto2010control} and \cite{tang2011state}. More recently, this framework has been extended to stabilize complex chains where general LTI ODEs are interconnected with parabolic PDEs (featuring diffusion and counter-convection) through distributed coupling, as achieved by the n-step backstepping procedure in \cite{xu2023stabilization}.

\subsection{Safe Adaptive Control}
The PDE control designs listed above mainly focus on stabilization and typically do not address safety requirements, i.e., guaranteeing that system outputs remain within prescribed safe sets during transients. However, safety is critical in applications such as autonomous vehicles, robotics, and UAVs. Control Barrier Functions (CBFs) provide a systematic framework for enforcing state constraints by ensuring the non-negativity of a barrier function \cite{ames2016control}. Extensions to high relative-degree CBFs have been developed in \cite{nguyen2016exponential,xiao2019control,xiao2021high}, building on non-overshooting control concepts \cite{krstic2006nonovershooting}. These ideas have enabled mean-square safety-critical stabilization \cite{li2020mean} and prescribed-time safety guarantees \cite{abel2023prescribed}.
Model uncertainties can invalidate safety guarantees derived from fully known models, motivating the recent interest in safe adaptive control. Most existing approaches are based on adaptive Control Lyapunov Functions (aCLFs) \cite{krstic1995control}, with adaptive Control Barrier Functions (aCBFs) introduced in \cite{taylor2020adaptive} to enforce safety under parametric uncertainties. Subsequent works reduced conservatism through data-driven estimation \cite{lopez2020robust}, hybrid adaptive laws \cite{maghenem2021adaptive}, and certainty-equivalence-based designs \cite{lopez2022unmatched}.

Most existing results on safe adaptive control focus on systems governed by ODEs, while safe adaptive control for PDE systems remains largely underexplored. For PDEs with fully known models, \cite{koga2023safe} introduced a CBF-based boundary control strategy for a parabolic Stefan system with actuator dynamics, and \cite{wang2025output} studied safe output regulation of coupled hyperbolic PDEs. Recently, \cite{wang2025output} proposed the first safe adaptive control framework for PDEs by developing an adaptive Control Barrier Function (aCBF) approach based on batch least-squares identification (BaLSI), which achieves finite-time parameter identification and was introduced in  \cite{karafyllis2018adaptive,karafyllis2019adaptivea} for nonlinear ODE, and extended to PDEs in \cite{karafyllis2019adaptiveb,wang2021regulation}. However, the method in \cite{wang2025output} is limited to hyperbolic PDE–ODE cascades with parametric uncertainties. In contrast, this paper addresses a fundamentally different class of systems, namely parabolic PDE–ODE cascades, and develops a safe adaptive controller that guarantees both safety and convergence to zero of all states.
\subsection{Main Contribution}
\begin{enumerate}
    \item Compared with existing adaptive boundary control for parabolic PDEs \cite{smyshlyaev2010adaptive}, \cite{krstic2008adaptive}, \cite{smyshlyaev2007adaptive}, \cite{wang2025adaptive}, \cite{li2024condition} and \cite{li2020adaptive}, the control design in this work further provides rigorous safety guarantees.

    \item While \cite{koga2023safe} focuses on safe backstepping control for a Stefan model described by a parabolic PDE, this work addresses a broader class of problems by incorporating in-domain instabilities, parametric uncertainties, and more general safety constraints.

    \item In contrast to the safe adaptive control for hyperbolic PDE-ODE cascades presented in \cite{wang2025output}, this paper considers parabolic PDE-ODE systems and accommodates more general safety constraints. To the best of our knowledge, this is the first result about safe adaptive control for parabolic PDEs.
\end{enumerate}

\subsection{Notation}
\begin{enumerate}
    \item The symbol $\mathbb{Z}_{+}$ denotes the set of all nonnegative integers.
    \item We use the notation \(L^{2}(0,1)\) for the standard space of the equivalence class of square-integrable, measurable functions \(f:(0,1)\to\mathbb{R}\), with $\|f\|^{2}:=\int_{0}^{1}f(x)^{2}dx<+\infty$ for $f\in L^{2}(0,1)$. 
    \item Let $u : \mathbb{R}_{+} \times [0,1] \to \mathbb{R}$ be given. We use the notation $u[t]$ to denote the profile of $u$ at certain $t \geq 0$, i.e., $u[t](x) = u(x,t)$ for all $x \in [0,1]$.
    \item We use $p_{x}^{(i)}(x,t)$ to denote the $i$ times partial derivatives with respect to $x$ of $p(x,t)$. Similarly, $p_{t}^{(i)}(x,t)$ denotes $i$ times partial derivatives with respect to $t$ of $p(x,t)$. Define $\underline{z}_{j} := [z_{1}, z_{2}, \ldots, z_{j}]^{T}$.
\end{enumerate}

\section{Problem Formulation}
The considered plant is 
\begin{align}
    \dot{Y}(t) &= A Y(t) + B u(0,t), \label{eq_1_1} \\
    u_t(x,t) &= \varepsilon u_{xx}(x.t) + \lambda u(x,t), \label{eq_1_2}\\
    u_{x}(0,t) &= 0\label{eq_1_3},\\
    u(1,t) &= U(t)\label{eq_1_4},
\end{align}
$\forall(x,t) \in [0,1] \times [0,\infty)$. The function $U(t)$ is the control input to be designed and $u(x,t) \in \mathbb{R}$ is the state of the PDE, $Y(t) = [y_1(t), y_2(t), \ldots, y_n(t)]^{T} \in \mathbb{R}^n$ is the state of the ODE. The column vector $B=[0,0,\cdots,b]^{T}$, where,  without loss of generality, we consider $b>0$. The matrix $A$ is in the form of
\begin{equation}
    A = \begin{pmatrix}
        a_{1,1}&1&0&0&\cdots&0\\
        a_{2,1}&a_{2,2}&1&0&\cdots&0\\
        &&\vdots&&&\\
        a_{n-1,1}&a_{n-1,2}&a_{n-1,3}&a_{n-1,4}&\cdots&1\\
        a_{n,1}&a_{n,2}&a_{n,3}&\cdots&a_{n,n-1}&a_{n,n}
    \end{pmatrix},
\end{equation}
where $a_{i,j}$ are arbitrary constants. This indicates that the Y-ODE is in the controllable form, which covers many practical models. The given safe barrier function $h$ should satisfy Assumption \ref{assumption_1}. The positive constants $b$ and $\lambda$ are unknown parameters, which satisfy Assumption \ref{assum_bound}.
\begin{assm} \label{assumption_1}
    The time-varying function $h$ is $n$ times differentiable with respect to each of its arguments, i.e., $y_{1}$ as well as $t$, and satisfies that $\frac{\partial h(y_{1}, t)}{\partial y_{1}} \neq 0$, $\forall y_1 \in \{\ell \in \mathbb{R}|h(\ell,t) \geq 0\}$, $t \in [0,\infty)$ when $h$ is positive at $t = 0$, or otherwise $\frac{\partial h(y_{1}, t)}{\partial y_{1}} \neq 0$, $\forall y_1 \in \mathbb{R}$, $t \in [0,\infty)$. Besides $|h(y_1(t),t)| < \infty \Rightarrow |y_1(t)| < \infty$ and $\lim_{t \to \infty} h(y_1(t),t) = 0 \Rightarrow \lim_{t \to \infty} y_1(t) = 0 \Rightarrow \lim_{t \to \infty}h_t^{(i)}(y_1(t),t) = 0$ for $i = 1, 2, \cdots n-1$.
\end{assm}
\begin{assm} \label{assum_bound}
    The bounds of the unknown parameters $\lambda$ and $b$ are known and arbitrary, i.e.,$ \underline{\lambda} \leq \lambda \leq \overline{\lambda}$, $0 < \underline{b} \leq b \leq \overline{b}$.
\end{assm}
\section{Nominal Safe Control}
\subsection{Transformations for the distal ODE}
Following \cite{wang2025safeoutputregulationcoupled}, we propose the following two transformations to convert the ODE into a form of control barrier functions.
The first transformation is
\begin{equation} \label{trans1_ode}
    Z(t) = T_{z}Y(t),
\end{equation}
where $Z(t) = [z_{1}(t), z_{2}(t), \ldots, z_{n}(t)]^{T}$ and the matrix $T_z \in \mathbb{R}^{n\times n}$ are defined as
\begin{equation} \label{eq:Tz}
    T_{z} = \begin{pmatrix}
        1&0&0&0&\cdots&0\\
        \varrho_{1,1}&1&0&0&\cdots&0\\
        \varrho_{2,1}&\varrho_{2,2}&1&0&\cdots&0\\
        &&\vdots&&&\\
        \varrho_{n-1,1}&\varrho_{n-1,2}&\varrho_{n-1,3}&\varrho_{n-1,4}&\cdots&1
    \end{pmatrix},
\end{equation}
with the constants $\varrho_{i,j}$ defined as
\begin{align}
    \varrho_{1,1} &= a_{1,1},\\
    \varrho_{2,1} &= a_{2,1} + \varrho_{1,1}a_{1,1}, \quad \varrho_{2,2} = a_{2,2} + \varrho_{1,1},
\end{align}
and for $i = 3,\ldots,n$, as
\begin{align}
    \varrho_{i,1} &= a_{i,1} + \sum_{j=1}^{i-1} \varrho_{i-1,j} a_{j,1},\\
    \varrho_{i,\iota} &= a_{i,\iota} + \varrho_{i-1,\iota-1} + \sum_{j=\iota}^{i-1} \varrho_{i-1,j} a_{j,\iota}, \quad \iota = 2,\ldots,i-1,\\
    \varrho_{i,i} &= a_{i,i} + \varrho_{i-1,i-1}.
\end{align}
Applying the transformation \eqref{trans1_ode}, we now convert the  Y-ODE in \eqref{eq_1_1} into
\begin{equation} \label{eq_Z}
    \dot{Z}(t) = A_{z}Z(t) + Bu(0,t) + BK^{T}Y(t),
\end{equation} 
where 
\begin{equation}
    A_{z} = \begin{pmatrix}
        0&1&0&0&\cdots&0\\
        0&0&1&0&\cdots&0\\
        &&\vdots&&&\\
        0&0&0&0&\cdots&1\\
        0&0&0&0&\cdots&0
    \end{pmatrix},
\end{equation}
and
\begin{equation}\label{K_T}
    K^{T} = \frac{1}{b} \left[\varrho_{n,1},\varrho_{n,2},\cdots,\varrho_{n,n}\right]_{1\times n}.
\end{equation}

We apply the second transformation:
\begin{align}
    h_{1}(z_{1}(t),t) &= h(y_{1}(t),t) + \sigma(t) \label{trans2_ode_2},\\
    h_{i}(\underline{z}_{i}(t),t) &= \sum_{j=1}^{i-1}\frac{\partial h_{i-1}}{\partial z_{j}}z_{j+1} + \frac{\partial h_{i-1}}{\partial t} + \kappa_{i-1} h_{i-1} \label{trans2_ode},
\end{align}
for $i = 2,3,\cdots,n$, with

\begin{equation} \label{sigma}
    \sigma (t) = \begin{cases}
        0, \quad \text{if } h(y_{1}(0),0) > 0,\\
        \begin{cases}
            e^{\frac{1}{t_a^2}}(-h(y_1(0),0)+ \beta)e^{\frac{-1}{(t-t_a)^2}}, & t \in [0,t_a),\\
            0, & t \geq t_a,
        \end{cases}\\
        \text{if} \quad  h(y_{1}(0),0) \leq 0,
    \end{cases}
\end{equation}
where $t_a$ and $\beta$ are arbitrarily positive design parameters. The function $\sigma(t)$, which is continuous and has continuous derivatives of all orders, is designed to address scenarios where the system states are in the unsafe region at the initial time $t=0$. Specifically, when the initial state $y_1(0)$ falls outside the safe region, i.e., the condition $h(y_1(0),0) \leq 0$, a new barrier function is created to guide the state return to the safe region. The constant $\beta$ indicates the distance from the safe barrier of the initial value of the new barrier function $h_1$, and $t_a$ is the upper bound of the time taken to return to the safe region. 

According to Assumption \ref{assumption_1}, $h_{1}(z_{1}(t),t)$ is $n$ times differentiable with respect to each of its arguments, i.e., $z_{1}$ and $t$. Noticing that $\dot{h}(y_{1}(t),t)$ denotes the full derivative with respect to $t$, whose calculation follows the chain rule, and $\frac{\partial h(y_{1}(t),t)}{\partial t}$ denotes the partial derivative with respect to $t$. Define
\begin{equation}
    \frac{\partial h(y_{1}(t),t)}{\partial y_{1}(t)} = \vartheta(y_{1}(t),t) = \vartheta(z_{1}(t),t).
\end{equation}
where $z_1(t)=y_1(t)$ according to \eqref{trans1_ode} and \eqref{eq:Tz}. 
Considering the fact that $\frac{\partial h_n}{\partial z_n} = \frac{\partial h_{n-1}}{\partial z_{n-1}} = \cdots = \frac{\partial h_1}{\partial z_1} = \frac{\partial h}{\partial y_1} = \vartheta$  that is obtained from \eqref{trans2_ode}, and applying \eqref{trans2_ode} for $i=n$, we have
\begin{equation}
    \dot{h}_n(\underline{z}_n(t),t) + \kappa_n h_n = b(\vartheta u(0,t) + f(\underline{z}_n(t),t) + \vartheta K^{T}Y(t)),
\end{equation}
where 
\begin{align}
    f(\underline{z}_n(t),t) &= \frac{1}{b} \left( \sum_{j=1}^{n-1}\frac{\partial h_n}{\partial z_j}z_{j+1} + \frac{\partial h_n}{\partial t} + \kappa_n \right.\notag\\
    &\left.\left[ \sum_{j=1}^{n-1}\frac{\partial h_{n-1}}{\partial z_j}z_{j+1} + \frac{\partial h_{n-1}}{\partial t} + \kappa_{n-1}h_{n-1} \right]\right).
\end{align}
Applying the second transformation \eqref{trans2_ode_2} and \eqref{trans2_ode}, defining $H(t)=[h_1,\cdots,h_n]^T$, we convert \eqref{eq_Z} into
\begin{equation}\label{Ht}
    \dot{H}(t) = A_h H(t) + B\left(\vartheta u(0,t) + f(\underline{z}_n(t),t) + \vartheta K^{T}Y(t)\right),
\end{equation}
where 
\begin{equation}\label{A_h}
    A_h = \begin{pmatrix}
        -\kappa_1&1&0&0&\cdots&0\\
        0&-\kappa_2&1&0&\cdots&0\\
        &&\vdots&&&\\
        0&0&0&0&\cdots&1\\
        0&0&0&0&\cdots&-\kappa_n
    \end{pmatrix}.
\end{equation}

\subsection{PDE Backstepping Transformation} \label{sec:PDE_backstepping}
In order to remove the destabilizing term from the PDE and cancel the extra terms in \eqref{Ht}, we introduce the following backstepping transformation:
\begin{equation}\label{eq_1_13}
    w(x,t) = u(x,t) - \int_{0}^{x} k(x,y)u(y,t)dy - r(x)Y(t) - p(x,t),
\end{equation}
with $k(x,y)$, $r(x)$, $p(x,t)$ satisfying
\begin{align}
    \varepsilon k_{xx}(x,y) &= \varepsilon k_{yy}(x,y) + (\lambda + c) k(x,y),\label{eq:ker1}\\
    k(x,x) &= -\frac{\lambda + c}{2\varepsilon} x,\\
    k_{y}(x,0) &=-r(x) \frac{B}{\varepsilon}\label{eq:ker3},\\
    \varepsilon \ddot{r}(x) &= r(x)(A + cI)\label{eq:ker4}, \\
    r(0) &= -K^{T},~
    \dot{r}(0) = 0 ,\label{eq:ker5}\\
    p_t(x,t) &= \varepsilon p_{xx}(x,t) - cp(x,t),\label{eq:ker6}\\
    p_x(0,t) &= 0,~
    p(0,t) = -\frac{1}{\vartheta(z_1(t),t)}f(\underline{z}_n(t),t),\label{eq:ker7}
\end{align}
where $c$ is a positive constant, and  $\vartheta(z_1(t),t)$ is nonzero according to Assumption \ref{assumption_1}. 
The solutions of \eqref{eq:ker1}--\eqref{eq:ker7} are given in Appendix \ref{appendix:sol}.

With \eqref{eq_1_13}, the original system \eqref{eq_1_2}--\eqref{eq_1_4} with \eqref{Ht} is converted into the following system
\begin{align}
    \dot{H}(t) &= A_hH(t) + B\vartheta w(0,t),\label{eq_1_22}\\
    w_t(x,t) &= \varepsilon w_{xx}(x,t) - cw(x,t), \label{eq_1_23}\\ 
    w_x(0,t) &= 0, \label{eq_1_24}\\
    w(1,t) &= \delta(t), \label{eq_1_25} 
\end{align}
by choosing the control input as 
\begin{equation}\label{U(t)}
    U(t) = \int_{0}^{1} k(1,y)u(y,t)dy + r(1)Y(t) + \delta(t) + p(1,t),
\end{equation}
where $\delta(t)$ is designed as the following form
\begin{align}
\delta(t) = \operatorname{sign}(\vartheta)Me^{-ct}\label{eq:delta}, 
\end{align}
with $M>0$ is a design parameter to be determined later. 

\subsection{Selection of safe design parameters}
The design parameters $\kappa_i$, $i=1,2,\cdots,n$ are selected to satisfy 
\begin{align}
    \kappa_i &> \max\{0, \acute{\kappa}_i\}, \, i = 1,2,\ldots,n-1, \label{kappa1}\\
    \kappa_n &\leq c,
\end{align}
where 
\begin{equation}\label{kappa2}
    \acute{\kappa}_i = \frac{-1}{h_i(\underline{z}_i(0),0)} \left[ \sum_{j=1}^{i} \frac{\partial h_i}{\partial z_j} z_{j+1}(0) + \frac{\partial h_i}{\partial t} \right].
\end{equation}
The gain condition \eqref{kappa1} ensures the following lemma.

\begin{lema}
    With the design parameters $\kappa_i$, $i=1,2,\cdots,n-1$ satisfying \eqref{kappa1}, the high-relative-degree ODE CBFs is initialized positive, i.e., $h_i(\underline{z}_i(0),0) > 0$ for $i=1,2,\cdots,n$.
\end{lema}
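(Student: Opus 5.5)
Proof plan: we argue by induction on $i$, evaluating the recursive definition \eqref{trans2_ode} at $t=0$.

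\emph{Base case $i=1$.} If $h(y_1(0),0)>0$, then $\sigma\equiv 0$ by \eqref{sigma}, so $h_1(z_1(0),0)=h(y_1(0),0)>0$. If $h(y_1(0),0)\le 0$, the second branch of \eqref{sigma} gives
\begin{equation*}
\sigma(0)=e^{\frac{1}{t_a^2}}\bigl(-h(y_1(0),0)+\beta\bigr)e^{\frac{-1}{t_a^2}}=-h(y_1(0),0)+\beta .
\end{equation*}
Hence $h_1(z_1(0),0)=\beta>0$. In both cases $h_1(\underline{z}_1(0),0)>0$.

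\emph{Inductive step.} Suppose $h_i(\underline{z}_i(0),0)>0$ for some $i\le n-1$. Setting $t=0$ in \eqref{trans2_ode} with index $i+1$ gives
\begin{equation*}
h_{i+1}(\underline{z}_{i+1}(0),0)=\Bigl[\sum_{j=1}^{i}\frac{\partial h_i}{\partial z_j}z_{j+1}(0)+\frac{\partial h_i}{\partial t}\Bigr]+\kappa_i h_i(\underline{z}_i(0),0).
\end{equation*}
The bracket is exactly $-h_i(\underline{z}_i(0),0)\,\acute{\kappa}_i$ by \eqref{kappa2}. Therefore
\begin{equation*}
h_{i+1}(\underline{z}_{i+1}(0),0)=h_i(\underline{z}_i(0),0)\,(\kappa_i-\acute{\kappa}_i).
\end{equation*}
Since $h_i(\underline{z}_i(0),0)>0$ by hypothesis and $\kappa_i>\acute{\kappa}_i$ by \eqref{kappa1}, this product is strictly positive. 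The induction runs for $i=1,\dots,n-1$ and yields $h_i(\underline{z}_i(0),0)>0$ for all $i=1,\dots,n$.

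The only delicate point is well-definedness: $\acute{\kappa}_i$ in \eqref{kappa2} divides by $h_i(\underline{z}_i(0),0)$. The induction establishes positivity of $h_i(\underline{z}_i(0),0)$ before $\kappa_i$ is chosen, so this denominator is nonzero and $\acute{\kappa}_i$ is defined. The gains must therefore be selected sequentially in $i$.

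The condition $\kappa_i>0$ in \eqref{kappa1} is not needed for the positivity argument. It matters only for the stability of $A_h$, used later.
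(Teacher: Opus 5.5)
Your proof is correct and follows the same route as the paper: the base case comes from \eqref{trans2_ode_2} and \eqref{sigma}, and the inductive step evaluates \eqref{trans2_ode} at $t=0$ and invokes \eqref{kappa1}--\eqref{kappa2}. You make explicit what the paper's terse argument leaves implicit, namely:
\begin{itemize}
\item that $h_1(z_1(0),0)=\beta$ in the initially unsafe case;
\item the identity $h_{i+1}(\underline{z}_{i+1}(0),0)=h_i(\underline{z}_i(0),0)\,(\kappa_i-\acute{\kappa}_i)$;
\item that the gains must be chosen sequentially so that each $\acute{\kappa}_i$ is well defined.
\end{itemize}
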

\begin{proof}
    According to \eqref{trans2_ode_2} and \eqref{sigma}, we know $h_1(z_1(0),0) > 0$. Recalling \eqref{trans2_ode}, we have $h_i(\underline{z}_i(0),0) = \sum_{j=1}^{i-1} \frac{\partial h_{i-1}}{\partial z_j} z_{j+1}(0) + \frac{\partial h_{i-1}}{\partial t} + \kappa_{i-1} h_{i-1}(\underline{z}_{i-1}(0),0)$ for $i=2,3,\cdots,n$. Choosing $\kappa_i$ to satisfy \eqref{kappa1} and \eqref{kappa2}, we can obtain that $h_i(\underline{z}_i(0),0) > 0$ for $i=2,3,\cdots,n$. The proof is complete.
\end{proof}

\subsection{Result of the nominal safe control}
\begin{thme} \label{Theorem1}
    For initial condition $u[0] \in L^2(0,1)$ and $Y(0) \in \mathbb{R}^n$, if design parameters $\kappa_i$, $i=1,2,\cdots,n-1$ satisfy \eqref{kappa1} and $\kappa_n$ satisfies $\kappa_n \leq c$, the closed-loop system consisting of the plant \eqref{eq_1_1}--\eqref{eq_1_4} and the control law \eqref{U(t)} have the following properties:
    \begin{enumerate}
        \item All states are convergent to zero, i.e.,  $$\lim_{t\to\infty}(\|u_{x}(\cdot,t)\|^{2} + \|u(\cdot,t)\|^2 + |Y(t)|^2)=0.$$
        \item Safety is ensured in the sense that if $h(y_1(0),0) > 0$, then $h(y_1(t),t) \geq 0$ for all $t \geq 0$; if $h(y_1(0),0) \leq 0$, then there exists a finite time $t_a > 0$ such that $h(y_1(t),t) \geq 0$ for all $t \geq t_a$, where $t_a$ can be arbitrarily assigned by users.
    \end{enumerate}
\end{thme}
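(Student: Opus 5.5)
The proof works in the target coordinates \eqref{eq_1_22}--\eqref{eq_1_25}. Safety is shown first, because it is what keeps $\vartheta$ and the transformation \eqref{eq_1_13} well defined. Convergence of the target system is then established and transferred back to $(u,Y)$ through the inverse backstepping transformation and Assumption \ref{assumption_1}.

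\textbf{Step 1: explicit target solution and sign of $\vartheta w(0,t)$.} The $w$-system \eqref{eq_1_23}--\eqref{eq_1_25} is a stable heat equation with boundary input $\delta(t)=\operatorname{sign}(\vartheta)Me^{-ct}$. Set $w=\operatorname{sign}(\vartheta)Me^{-ct}+\tilde w$. Then $\tilde w$ satisfies $\tilde w_t=\varepsilon\tilde w_{xx}-c\tilde w$ with $\tilde w_x(0)=0$ and $\tilde w(1)=0$, so $\|\tilde w\|\le e^{-(c+\varepsilon\pi^2/4)t}\|\tilde w[0]\|$. Parabolic smoothing then gives a bound on $|\tilde w(0,t)|$ for $t>0$, and a modal expansion gives an estimate near $t=0$. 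Next choose $M$ large relative to the initial data, namely $\|w[0]\|$, which is computable from $u[0]$, $Y(0)$ and $p(\cdot,0)$. With this choice, $\vartheta w(0,t)\ge 0$ for all $t$; at the very least $\int$-type positivity holds in the variation-of-constants formula below. I expect the main obstacle to be the uniform sign control of $w(0,t)$ at small times for merely $L^2$ data. My first attempt would be to show that the modal expansion $\tilde w(0,t)=\sum_k a_k e^{-(c+\varepsilon\mu_k^2)t}$ satisfies $|\tilde w(0,t)|e^{ct}\le$ something integrable. That suffices, because only $\int_0^t e^{-\kappa(t-s)}\vartheta w(0,s)\,ds$ enters the argument.

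\textbf{Step 2: safety.} By Assumption \ref{assumption_1}, $\vartheta$ does not change sign on the relevant set. Using the upper-triangular structure of $A_h$ in \eqref{A_h}, integrate from the bottom up: $h_n(t)=e^{-\kappa_n t}h_n(0)+\int_0^t e^{-\kappa_n(t-s)}b\vartheta w(0,s)\,ds$. The condition $\kappa_n\le c$ makes the $Me^{-cs}$ contribution dominate: the integrand $e^{(\kappa_n-c)s}$ does not decay faster than $e^{-\kappa_n t}$. The Lemma gives $h_n(0)>0$, so $h_n(t)>0$. Then, inductively, $\dot h_i=-\kappa_i h_i+h_{i+1}\ge -\kappa_i h_i$ gives $h_i(t)\ge e^{-\kappa_i t}h_i(0)>0$, using again $h_i(0)>0$ from the Lemma. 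Hence $h_1\ge0$, that is, $h(y_1,t)\ge -\sigma(t)$. If $h(y_1(0),0)>0$, then $\sigma\equiv0$ and safety holds for all $t$. Otherwise $\sigma(t)=0$ for $t\ge t_a$ by \eqref{sigma}, which gives $h\ge0$ on $[t_a,\infty)$. Since $h_1>0$ keeps $y_1$ in the region where $\vartheta\neq0$, the division in \eqref{eq:ker7} is legitimate along the trajectory.

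\textbf{Step 3: convergence.} From Step 1, $\|w\|,\|w_x\|\to0$ exponentially; for $\|w_x\|$, use an $H^1$ Lyapunov functional on $\tilde w$. Hence $w(0,t)\to0$, and since $A_h$ is Hurwitz, $H(t)\to0$. In particular $h_1\to0$, so $h(y_1,t)\to0$ because $\sigma$ vanishes after $t_a$, and Assumption \ref{assumption_1} gives $y_1\to0$ and $h_t^{(i)}\to0$. Then $z_2,\dots,z_n\to0$ follow recursively from \eqref{trans2_ode}, using $\vartheta\neq0$. This gives $Z\to0$, hence $Y=T_z^{-1}Z\to0$ and $f\to0$, so $p(0,t)\to0$. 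Because $p$ solves the stable heat equation \eqref{eq:ker6} with vanishing boundary data, a standard ISS estimate gives $\|p\|+\|p_x\|\to0$. Finally, invert \eqref{eq_1_13} with a bounded inverse Volterra kernel $l(x,y)$, so that $u=w+\int_0^x l w\,dy+\tilde r(x)Y+\tilde p$. This yields $\|u\|^2+\|u_x\|^2+|Y|^2\to0$.
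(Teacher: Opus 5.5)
Your Step 3 is sound and close to the paper's convergence argument; the paper uses the Lyapunov functional \eqref{lyapunov} where you use the explicit splitting and an ISS bound on $p$. Your bottom-up induction $h_i(t)\ge e^{-\kappa_i t}h_i(\underline z_i(0),0)$ and the case distinction on $\sigma$ also match the paper.

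The gap is the sign argument for $h_n$ in Steps 1--2. In your splitting $w=\operatorname{sign}(\vartheta)Me^{-ct}+\tilde w$, the initial datum is $\tilde w[0]=w[0]-\operatorname{sign}(\vartheta)M$. So $\tilde w$ is itself of size $M$, and at $x=0$ it cancels $\delta$ as $t\to0^+$, because the input at $x=1$ needs time to diffuse to $x=0$. Precisely, the $M$-dependent part of $w(0,t)$ is $\operatorname{sign}(\vartheta)Me^{-ct}\phi(t)$, where $\phi(t)=1-\frac{4}{\pi}\sum_{j\ge0}\frac{(-1)^j}{2j+1}e^{-\varepsilon(\frac{\pi}{2}+j\pi)^2t}$, and $\phi(t)\to0$ as $t\to0^+$. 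Taking $w[0]\equiv-\operatorname{sign}(\vartheta)$ gives $\vartheta w(0,t)=|\vartheta|e^{-ct}\bigl(M\phi(t)-(1-\phi(t))\bigr)<0$ for small $t$, for every $M$. So no choice of $M$ yields $\vartheta w(0,t)\ge0$ for all $t$.

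Your fallback does not repair this. Any integrable bound on $|\tilde w(0,t)|e^{ct}$ scales like $\|\tilde w[0]\|\ge M-\|w[0]\|$. The negative contribution it permits therefore grows with $M$ at the same rate as the good term $bM\int_0^t|\vartheta|e^{(\kappa_n-c)s}ds$, and exceeds it for small $t$, while $h_n(\underline z_n(0),0)$ is fixed. Enlarging $M$ cannot close such an estimate, and $\kappa_n\le c$ is not what makes the $M$-term win.

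The missing idea is a different splitting, $w=w^{\mathrm{id}}+w^{\mathrm{bd}}$. Here $w^{\mathrm{id}}$ solves the target heat equation with initial datum $w[0]$ and $w^{\mathrm{id}}(1,t)=0$, so it depends on $w[0]$ only, not on $M$, and $e^{ct}w^{\mathrm{id}}(0,t)=2\sum_j e^{-\varepsilon(\frac{\pi}{2}+j\pi)^2t}\acute\theta_j$. The other part $w^{\mathrm{bd}}$ has zero initial datum and $w^{\mathrm{bd}}(1,t)=\delta(t)$, so $\vartheta w^{\mathrm{bd}}(0,t)=M|\vartheta|e^{-ct}\phi(t)$. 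Two ingredients then close the argument.

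First, $\phi(t)\in(0,1)$ and $\phi$ is increasing for $t>0$. The paper proves this through the product formula for $\partial_z\vartheta_1(0,q)$ (Claim~\ref{Jacobi_theta_claim}); the parabolic maximum principle applied to $e^{ct}w^{\mathrm{bd}}$ would also give it.

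Second, the order of quantifiers. Fix $t_{\rm M}>0$ first, using only $h_n(\underline z_n(0),0)>0$ and $w^{\mathrm{id}}$ (with $|\vartheta|\le\bar\vartheta$), so that $h_n(\underline z_n(0),0)+b\int_0^te^{\kappa_ns}\vartheta w^{\mathrm{id}}(0,s)\,ds\ge0$ on $[0,t_{\rm M}]$. Only then choose $M>\sup_{t\ge t_{\rm M}}|e^{ct}w^{\mathrm{id}}(0,t)|/\phi(t_{\rm M})$, which is exactly \eqref{M}.

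Positivity of $\vartheta w^{\mathrm{bd}}(0,\cdot)$ covers $[0,t_{\rm M}]$, and monotonicity of $\phi$ gives $\vartheta w(0,t)>0$ for $t\ge t_{\rm M}$. Hence $h_n>0$ for all $t\ge0$, and the rest of your Step 2 goes through.
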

\begin{proof}
     \textit{(1)} We show that the convergence to zero of all states in property 1 is achieved by Lyapunov analysis. 
        Defining the following transformation: 
        \begin{equation} \label{w_to_v}
            w(x,t) = v(x,t) + \delta(t).
        \end{equation}
        Considering \eqref{eq:delta}, we have
        \begin{align}
            \dot{H}(t) &= A_{h}H(t) + B\vartheta v(0,t) + B\vartheta\delta(t), \\
            v_{t}(x,t) &= \varepsilon v_{xx}(x,t) - cv(x,t)  \label{PDE_v}, \\
            v_{x}(0,t) &= 0, \\
            v(1,t) &= 0. \label{PDE_BC_v}
        \end{align}
        Define the Lyapunov function
        \begin{equation} \label{lyapunov}
            \begin{split}
                V(t) =& H(t)^T P H(t) + \frac{1}{2}\int_{0}^{1} v(x,t)^2 dx \\
                &+ \frac{a_{1}}{2}\int_{0}^{1} v_{x}(x,t)^2 dx + \frac{a_{2}}{2}\delta(t)^{2},
            \end{split}
        \end{equation}
        where the positive definite matrix $P = P^T$ is the solution of the Lyapunov equation $A_h^T P + PA_h = -Q$ for some $Q = Q^T >0$, and where the positive analysis parameters $a_1$ and $a_2$ are to be determined later. Defining
        \begin{equation}
            \Xi(t) = |H(t)|^2 + \|v(\cdot,t)\|^2 + \|v_{x}(\cdot,t)\|^2 + \delta(t)^{2},
        \end{equation}
        we have 
        \begin{equation}\label{xi_nequal}
            \xi_1 \Xi(t) \leq V(t) \leq \xi_2 \Xi(t),
        \end{equation}
        for some positive $\xi_1$ and $\xi_2$. Taking the time derivative of $V(t)$, applying Young's inequality and Cauchy-Schwarz inequality, we have
        \begin{align*} 
            \dot{V}(t) =& -H(t)^T Q H(t) + 2H(t)^T P B \vartheta v(0,t) \notag\\
            &+ 2H(t)^{T}PB\vartheta\delta(t) + \int_{0}^{1} v(x,t) v_t(x,t) dx \notag \\
            &+ a_{1}\int_{0}^{1}v_{x}(x,t)v_{x,t}(x,t)dx - a_{2}c\delta(t)^{2}\notag\\
            \leq& -\lambda_{\min}(Q) H(t)^2 +\frac{\lambda_{\min}(Q)}{3}H(t)^2 \notag\\
            &+ \frac{3|PB|^2}{\lambda_{\min}(Q)} \bar\vartheta^2 v(0,t)^2 - a_{2}c\delta(t)^{2}\notag\\
            & + \frac{\lambda_{\min}(Q)}{3}H(t)^2+ \frac{3|PB|^2}{\lambda_{\min}(Q)} \vartheta^2\delta(t)^2 \notag\\
            & - \varepsilon \int_{0}^{1}v_{x}(x,t)^2dx - c\int_{0}^{1}v(x,t)^2dx \notag\\
            & -a_{1}\varepsilon \int_{0}^{1} v_{xx}(x,t)^{2}dx - a_{1}c\int_{0}^{1} v_{x}(x,t)^{2}dx\\
            \leq&-\frac{\lambda_{\min}(Q)}{3} H(t)^2 - c\int_{0}^{1} v(x,t)^2 dx \notag\\
            &\quad - \left(\varepsilon + a_{1}c -\frac{3|PB|^2\bar\vartheta^2}{\lambda_{\min}(Q)}\right) \int_{0}^{1} v_x(x,t)^2 dx \notag \\
            &\quad- \left(a_{2}c-\frac{3|PB|^2\bar\vartheta^2}{\lambda_{\min}(Q)}\right) \delta(t)^2 ,
        \end{align*}
        where $\bar\vartheta$ is the upper bound of $\vartheta=\frac{\partial h}{\partial y_1}$ that is bounded according to Assumption \ref{assumption_1} that means that $h$ is $n$ times differentiable with respect to $y_1$.
Choosing $a_{1}$ and $a_{2}$ large engough to satisfy
        \begin{align} 
            a_{1} &> \max \left\{\frac{3|PB|^2\bar\vartheta^2}{c\lambda_{\min}(Q)}-\frac{\varepsilon}{c},0\right\} \label{a1},\\ 
            a_{2} &> \frac{3|PB|^2 \bar\vartheta^2}{c\lambda_{\min}(Q)},  \label{a2}
        \end{align}
        we then obtain that
        \begin{align} \label{dot_V}
            \dot{V}(t) \leq  -\gamma V(t),
        \end{align}
        where 
        \begin{equation}
            \begin{split}
                \gamma = \frac{1}{\xi_2} \min &\left\{\frac{\lambda_{\min}(Q)}{3}, \varepsilon + a_{1}c -\frac{3|PB|^2\bar\vartheta^2}{\lambda_{\min}(Q)},\right.\\
                &\left.c,a_{2}c-\frac{3|PB|^2\bar\vartheta^2}{\lambda_{\min}(Q)}\right\}.
            \end{split}
        \end{equation}
        Recalling \eqref{xi_nequal}, it follows that
        \begin{equation}\label{xi}
            \Xi(t) \leq \frac{\xi_2}{\xi_1} \Xi(0) e^{-\gamma t}.
        \end{equation}
        Next, we show the convergence to zero of the original states from the target system's stability obtained from \eqref{xi}. 
        From Assumption \ref{assumption_1}, \eqref{trans2_ode_2} and the convergence to zero of $|H(t)|$ obtained from \eqref{xi}, we have $y_1(t) = z_1(t)$ converge to zero. Then recalling \eqref{trans2_ode} for $i = 2$, together with the convergence to zero of $h_1(t)$ and $h_2(t)$ as well as Assumption \ref{assumption_1}, we obtain that $z_2(t)$ converge to zero. Through the recursive process, we obtain that $|Z(t)|$ is convergent to zero. According to the transformation defined in \eqref{trans1_ode}, the convergence of $|Y(t)|$ is obtained. The inverse of \eqref{eq_1_13} is obtained as
        \begin{subequations}
            \begin{align}\label{eq:inv}
                u(x,t) &= w(x,t) - \int_{0}^{x} \bar{k}(x,y)w(y,t)dy \notag\\
                &\quad- \bar{r}(x)Y(t) - \bar{p}(x,t),
            \end{align}
        \end{subequations}
        where the solutions of $\bar{k}(x,y)$, $\bar{r}(x)$ and  $\bar{p}(x,t)$ can be obtained similarly to those in the transformation \eqref{eq_1_13}. Using the convergence to zero of $\|w(\cdot,t)\|$, $\|w_x(\cdot,t)\|$ obtained from the convergence of $\|v(\cdot,t)\|$, $\|v_x(\cdot,t)\|$, $\delta(t)$, $|Y(t)|$, and $\|\bar p(\cdot,t)\|$, $\|\bar p_x(\cdot,t)\|$, we obtain the convergence of $\|u(\cdot,t)\|$, $\|u_x(\cdot,t)\|$ via \eqref{eq:inv}. The property 1 is obtained.
        
        \textit{(2)} According to \eqref{A_h}, \eqref{eq_1_22}, we have
            \begin{align}
                h_n(\underline{z}_n(t),t) &= e^{-\kappa_n t} h_n(\underline{z}_n(0),0)\notag\\
                &\quad + e^{-\kappa_n t} b\int_{0}^{t} e^{\kappa_n \tau} \vartheta w(0,\tau) d\tau \label{h_n_1} \\
                &= e^{-\kappa_n t}\big(h_n(\underline{z}_n(0),0) \notag\\
                &\quad+ b\int_{0}^{t} e^{\kappa_n \tau} \vartheta w(0,\tau) d\tau \big). \label{h_n_2}
            \end{align}
            It is obvious that $h_n(\underline{z}_n(t),t) > 0$ if $h_n(\underline{z}_n(0),0) 
            + \\ b\int_{0}^{t} e^{\kappa_n \tau} \vartheta w(0,\tau) d\tau > 0$ for all $t \geq 0$.
            Recalling \eqref{eq_1_22}--\eqref{eq_1_25}, the solution $w(x,t)$ is obtained as
\begin{align}\label{eq_1_27}
    w(x,t) &= \delta(t) + \sum_{j=0}^{\infty} \cos((\frac{\pi}{2}+j\pi) x)\left[ e^{-(\varepsilon (\frac{\pi}{2}+j\pi)^2 + c)t} \theta_{j}- \right.\notag\\
    \frac{2(-1)^{j}}{\frac{\pi}{2}+j\pi} &\left.\int_0^t e^{-(\varepsilon (\frac{\pi}{2}+j\pi)^2 + c)(t-\tau)} \left( c\delta(\tau) + \dot{\delta}(\tau) \right) d\tau
    \right], 
\end{align}
where $$\theta_{j} = 2 \displaystyle\int_0^1 [w(x,0) - \delta(0)] \cos((\frac{\pi}{2}+j\pi) x) dx.$$ The detailed process for solving $w(x,t)$ is presented in the Appendix \ref{appendix:solving_w}.
Considering \eqref{eq_1_27}, choosing $\kappa_n \leq c$, and recalling \eqref{eq:delta}, we have
            \begin{align} \label{h_n(0)}
                &h_n(\underline{z}_n(0),0) + b \int_{0}^{t} e^{\kappa_n \tau} \vartheta w(0,\tau) d\tau  \notag\\ 
                &=2b\int_{0}^{t} e^{(\kappa_n-c)\tau}\left(\vartheta \sum_{j=0}^{+\infty} e^{-\varepsilon(\frac{\pi}{2}+j\pi)^{2}\tau} \acute{\theta}_j \right) d\tau \notag\\
                &\quad +h_n(\underline{z}_n(0),0) + bM\int_{0}^{t} e^{(\kappa_n-c)\tau}\vartheta \left[\operatorname{sign}(\vartheta)  \vphantom{\sum_{j=0}^{\infty}} \right. \notag\\
                &\left.\quad- 2\operatorname{sign}(\vartheta(y_1(0)),0)\sum_{j=0}^{+\infty} e^{-\varepsilon(\frac{\pi}{2}+j\pi)^{2}\tau} \frac{(-1)^{j}}{\frac{\pi}{2}+j\pi}\right]d\tau,
            \end{align}
            where 
            \begin{equation} \label{acute_theta_j}
                \acute{\theta}_j =  \int_{0}^{1} w(x,0) \cos\left(\frac{\pi}{2} + j\pi\right)x \, dx.
            \end{equation}
            Obviously, there exists a finite time $t_{\rm M} > 0$ such that
            \begin{align} \label{h_n(0)_part1}
                2b\int_{0}^{t} e^{(\kappa_n-c)\tau}&\left(\vartheta \sum_{j=0}^{+\infty} e^{-\varepsilon(\frac{\pi}{2}+j\pi)^{2}\tau} \acute{\theta}_j \right)\, d\tau \notag \\
                & + h_n(\underline{z}_n(0),0)
            \end{align}
            is non-negative for all $t \in [0, t_{\rm M}]$ since $h_n(\underline{z}_n(0),0) > 0$. Considering the fact that 
            \begin{align} \label{fact_1}
                &\int_{0}^{t} e^{(\kappa_n-c)\tau}\vartheta \left[\operatorname{sign}(\vartheta) \vphantom{\sum_{j=0}^{\infty}}\right.\notag\\
                &\left.- 2\operatorname{sign}(\vartheta(y_1(0),0))\sum_{j=0}^{+\infty} e^{-\varepsilon(\frac{\pi}{2}+j\pi)^{2}\tau} \frac{(-1)^{j}}{\frac{\pi}{2}+j\pi}\right]d\tau \geq 0,
            \end{align}
            whose proof is presented in the Appendix \ref{appendix:proof_fact_1}, we thus obtain $h_n(\underline{z}_n(t),t) > 0$ for all $t \in [0, t_{\rm M}]$. By choosing $M$ to satisfy
            \begin{equation} \label{M}
                M > \frac{\underset{t\geq t_{\rm M}}{\sup}\left| 2\sum_{j=0}^{+\infty} e^{-\varepsilon(\frac{\pi}{2}+j\pi)^{2}t} \acute{\theta}_j\right|}{1 - 2\sum_{j=0}^{+\infty} e^{-\varepsilon(\frac{\pi}{2}+j\pi)^{2}t_{\rm M}} \frac{(-1)^{j}}{\frac{\pi}{2}+j\pi}},
            \end{equation}
            where the proof of the convergence of the series in \eqref{M} is given in  Appendix \ref{appendix:proof_series_convergence},  we obtain that  $h_n(\underline{z}_n(t),t) > 0$ for all $t \geq 0$, whose proof is presented at Appendix \ref{appendix:proof_h_n_positive}.
            
            According to \eqref{Ht}, \eqref{A_h} and the parameters $\kappa_i$, $i=1,2,\cdots,n-1$ satisfying \eqref{kappa1} and \eqref{kappa2}, we obtain that $h_1(z_1(t),t) > 0$ for all $t \geq 0$. 
            Now we show safety in the followng two cases:
            
            Case 1. When $h(y_1(0),0) > 0$, it is obtained from \eqref{sigma} that $\sigma(t) = 0$. Recalling \eqref{trans2_ode_2}, we have $h_1(z_1(t),t) = h(y_1(t),t) > 0$ for all $t \geq 0$; 
            
            Case 2. When $h(y_1(0),0) \leq 0$, it is obtained from \eqref{sigma} that $\sigma(t) = 0$ for $t \geq t_a$. According to \eqref{trans2_ode_2}, we know $h_1(z_1(t),t) = h(y_1(t),t) > 0$ for $t \geq t_a$. 
            
            The proof of this theorem is complete. 
\end{proof}
\section{Safe adaptive control design}
Defining $\theta = [\lambda, b]^T$, replacing the unknown parameters $\lambda, b$ in the nominal controller \eqref{U(t)}  with the parameter estimates, we have
\begin{equation} \label{adaptive_control}
    U_d(t) := U(t,\hat{\theta}(t_i)), \quad t \in [t_i,t_{i+1}),
\end{equation}
where 
\begin{align}\label{eq:hattheta}
    \hat{\theta} = [\hat{\lambda},\hat{b}]^{T}, 
\end{align}
is an estimation generated by a triggered batch least-squares identifier that will be defined later, and where the sequence of time instants $t_i$ is defined as
\begin{equation}
    t_{i+1} = t_i + T\label{eq:ti},
\end{equation}
with $T$ as a free positive design parameter.

\subsection{Batch least-squares identifier}
According to \eqref{eq_1_1} and \eqref{eq_1_2}, similar to \cite{wang2022event}, we obtain for $\tau>0$ and $\bar{n} = 1,2,\cdots$ that
\begin{align}
    &\frac{d}{d\tau} \int_{0}^{1} \sin(x \pi \bar{n})u(x,\tau)dx = -\varepsilon (\pi \bar{n})(-1)^{\bar{n}}u(1,\tau) \notag \\
    &\quad + \varepsilon (\pi \bar{n})u(0,\tau)-\varepsilon (\pi \bar{n})^2 \int_{0}^{1} \sin(x \pi \bar{n})u(x,\tau)dx \label{eq_5_3} \notag\\
    &\quad + \lambda \int_{0}^{1} \sin(x \pi \bar{n})u(x,\tau)dx,\\
    &\frac{d}{d\tau}y_n(\tau) = \sum_{i=1}^{n} a_{ni} y_i(\tau) + b u(0,\tau). \label{eq_5_4}
\end{align}
Define the instant $\mu_{i+1}$ as 
\begin{equation}
    \mu_{i+1} = \min \{t_j : j \in \{0,\ldots,i\}, t_j \geq t_{i+1} - \tilde{N}T \},
\end{equation}
for $i \in \mathbb{Z}^+$, where the positive integer $\tilde{N} \geq 1$ is a free design parameter. Integrating \eqref{eq_5_3} and \eqref{eq_5_4} from $\mu_{i+1}$ to $t$, we obtain
\begin{align}
    p_{\bar{n}}(t,\mu_{i+1}) = \lambda g_{\bar{n}}(t,\mu_{i+1}),~p_b(t,\mu_{i+1}) = b q_b(t,\mu_{i+1}) ,\label{eq_5_7}
\end{align}
where  $p_{\bar{n}}$, $g_{\bar{n}}$, $p_b$ and $q_b$ are 
\begin{align}
    &p_{\bar{n}}(t,\mu_{i+1}) = \int_{0}^{1} \sin(x \pi \bar{n})u(x,t)dx \notag \\
    & \quad+ \varepsilon \pi \bar{n}(-1)^{\bar{n}} \int_{\mu_{i+1}}^{t} u(1,\tau)d\tau - \varepsilon \pi \bar{n} \int_{\mu_{i+1}}^{t} u(0,\tau)d\tau \notag\\
    & \quad + \varepsilon (\pi \bar{n})^2 \int_{\mu_{i+1}}^{t}\int_{0}^{1} \sin(x \pi \bar{n})u(x,\tau)dxd\tau \notag\\
    & \quad - \int_{0}^{1} \sin(x \pi \bar{n})u(x,\mu_{i+1})dx, \\
    &g_{\bar{n}}(t,\mu_{i+1}) = \int_{\mu_{i+1}}^{t}\int_{0}^{1} \sin(x \pi \bar{n})u(x,\tau)dx d\tau, \label{g_n}
\end{align}
for $\bar{n} = 1,2,\cdots$ and
\begin{align}
    p_b(t,\mu_{i+1}) &= y_n(t) -y_n(\mu_{i+1}) - \int_{\mu_{i+1}}^{t} \sum_{i=1}^{n} a_{ni} y_i(\tau)d\tau, \\
    q_b(t,\mu_{i+1}) &= \int_{\mu_{i+1}}^{t} u(0,\tau)d\tau. \label{q_b}
\end{align}
Define a function $h_{i,\bar{n}} : \mathbb{R}^3 \to \mathbb{R}_+$ by the foruma
\begin{align} \label{eq_5_13}
    h_{i,\bar{n}}(\ell) &= \int_{\mu_{i+1}}^{t_{i+1}} \Big[(p_{\bar{n}}(t,\mu_{i+1}) - \ell_{1}g_{\bar{n}}(t,\mu_{i+1}))^2 \notag\\
    &\quad + (p_b(t,\mu_{i+1}) - \ell_{2}q_b(t,\mu_{i+1}))^2\Big]dt,
\end{align}
for $i \in \mathbb{Z}^{+}$, where $\ell = [\ell_{1},\ell_{2}]^{T}$. According to  \eqref{eq_5_7}, the function $h_{i,\bar{n}}(\ell)$ has a global minimum $h_{i,\bar{n}}(\theta) = 0$. We get from the Fermat's theorem (vanishing gradient at extrema), that is, differentiating the functions $h_{i,\bar{n}}(\ell)$ defined by \eqref{eq_5_13} with respect to $\ell_1$, $\ell_2$ respectively and making the derivatives at the position of the global minimum $[\ell_1,\ell_2] = [\lambda,b]$ as zero, that the following matrix equation holds for every $i \in \mathbb{Z}^{+}$ and $\bar{n} = 1,2,\cdots$:
\begin{equation}
    Z_{\bar{n}}(\mu_{i+1},t_{i+1}) = G_{\bar{n}}(\mu_{i+1},t_{i+1}) \theta, \label{eq:zG}
\end{equation}
where $\theta = [\lambda,b]^{T}$ is a column vector of the unknown parameters, and where 
\begin{equation}
    Z_{\bar{n}} = [H_{\bar{n},1}, H_{2}]^{T}, \quad G_{\bar{n}} = \begin{bmatrix}
        Q_{\bar{n}, 1} & 0 \\
        0 & Q_{2}\label{eq:Gbarn}
    \end{bmatrix},
\end{equation}
with $H_{\bar{n},1}$, $H_{2}$, $Q_{\bar{n},1}$ and $Q_{2}$ defined as
\begin{align}
    H_{\bar{n},1}(\mu_{i+1},t_{i+1}) &= \int_{\mu_{i+1}}^{t_{i+1}} p_{\bar{n}}(t,\mu_{i+1})g_{\bar{n}}(t,\mu_{i+1})dt, \\
    H_{2}(\mu_{i+1},t_{i+1}) &= \int_{\mu_{i+1}}^{t_{i+1}} p_b(t,\mu_{i+1})q_b(t,\mu_{i+1})dt, \\
    Q_{\bar{n},1}(\mu_{i+1},t_{i+1}) &= \int_{\mu_{i+1}}^{t_{i+1}} g_{\bar{n}}(t,\mu_{i+1})^2 dt, \label{Q_n1}\\
    Q_{2}(\mu_{i+1},t_{i+1}) &= \int_{\mu_{i+1}}^{t_{i+1}} q_b(t,\mu_{i+1})^2 dt. \label{Q_2}
\end{align}
The parameter estimator (update law) is defined as 
\begin{align}
    \hat{\theta}(t_{i+1}) &=  \argmin \left\{|\ell - \hat{\theta}(t_i)|^{2}: \ell \in \Theta, \right. \notag\\
    &\left. Z_{\bar{n}}(\mu_{i+1},t_{i+1}) = G_{\bar{n}}(\mu_{i+1},t_{i+1}) \ell,~\bar{n} = 1,2,\ldots \right\}, \label{eq_5_19}
\end{align}
where 
\begin{equation}
    \Theta = \left\{ \ell \in \mathbb{R}^2 : \underline{\lambda} \leq \lambda \leq \bar{\lambda}, 0 < \underline{b} \leq b \leq \bar{b} \right\}.
\end{equation}
\subsection{Safe Adaptive Controller}
Defining $e(t) = {U}(t) - U_d(t)$ as the control input error between \eqref{adaptive_control} and \eqref{U(t)}, through the same process in Sec. \ref{sec:PDE_backstepping}, the target system is rewritten as
\begin{align}
    \dot{H}(t) &= A_hH(t) + B\vartheta w(0,t),\\
    w_t(x,t) &= \varepsilon w_{xx}(x,t) - cw(x,t), \\ 
    w_x(0,t) &= 0, \\
    w(1,t) &= \delta(t) + e(t),
\end{align}
where, as shown in Appendix \ref{appendix:solving_w}, $w(x,t)$ is solved as
\begin{align}
    w(x,t) &= \delta(t) + e(t) + \sum_{j=0}^{\infty} \left[ e^{-(\varepsilon (\frac{\pi}{2}+j\pi)^2 + c)t} \theta_{j}- \right.\notag\\
    &\left.\frac{2(-1)^{j}}{\frac{\pi}{2}+j\pi} \int_0^t e^{-(\varepsilon (\frac{\pi}{2}+j\pi)^2 + c)(t-\tau)} \Big( c\delta(\tau) \right. \notag \\
    &\left.+ ce(\tau) + \dot{\delta}(\tau) + \dot{e}(\tau) \Big) d\tau \right] \cos((\frac{\pi}{2}+j\pi) x).
\end{align}
Through the same process applied in \eqref{h_n(0)}, we obtain that
\begin{align} \label{title_h_n(0)}
    &h_n(\underline{z}_n(0),0) + b \int_{0}^{t} e^{\kappa_n \tau} \vartheta w(0,\tau) d\tau  \notag\\ 
    &=h_n(\underline{z}_n(0),0) + 2b\int_{0}^{t} e^{(\kappa_n-c)\tau}\vartheta\left( \sum_{j=0}^{+\infty} e^{-\varepsilon(\frac{\pi}{2}+j\pi)^{2}\tau} \acute{\theta}_j \right. \notag\\
    &\quad + \left. \sum_{j=0}^{+\infty}\frac{(-1)^j}{\frac{\pi}{2} + j\pi}e^{-\varepsilon(\frac{\pi}{2} + j\pi)^2\tau} \left(e(0)  \vphantom{\int_{0}^{t}} \right.\right. \notag \\
    &\quad \left.\left. + \int_{0}^{\tau}\varepsilon(\frac{\pi}{2}+c)^2 e^{(\varepsilon(\frac{\pi}{2}+j\pi)^2 + c)\eta}e(\eta)d\eta\right) \right)\, d\tau \notag\\
    &\quad +bM\int_{0}^{t} e^{(\kappa_n-c)\tau}\vartheta \left[\operatorname{sign}(\vartheta) \vphantom{\sum_{j=0}^{\infty}} \right. \notag \\
    & \quad \left.- 2\operatorname{sign}(\vartheta(y_1(0),0))\sum_{j=0}^{+\infty} e^{-\varepsilon(\frac{\pi}{2}+j\pi)^{2}\tau} \frac{(-1)^{j}}{\frac{\pi}{2}+j\pi}\right]d\tau
\end{align}
which is different with \eqref{h_n(0)} due to the presence of the control input error $e(t)$.
Considering $h_n(\underline{z}_n(0),0) > 0$, we know $t_{\rm M}$ generated by the following triggering condition
\begin{align}
	&t_{\rm M}=\inf \bigg\{t\ge0: h_n(\underline{z}_n(0),0)\notag\\&+ \underset{\underline{\lambda}\leq \lambda \leq \overline{\lambda},0<\underline{b}\leq b \leq \overline{b}}{\max}2b\int_{0}^{t} e^{(\kappa_n-c)\tau}\vartheta\left( \sum_{j=0}^{+\infty} e^{-\varepsilon(\frac{\pi}{2}+j\pi)^{2}\tau} \acute{\theta}_j \right. \notag\\
    &+ \left. \sum_{j=0}^{+\infty}\frac{(-1)^j}{\frac{\pi}{2} + j\pi}e^{-\varepsilon(\frac{\pi}{2} + j\pi)^2\tau} \left(e(0)  \vphantom{\int_{0}^{t}} \right.\right. \notag \\
    & \left.\left. + \int_{0}^{\tau}\varepsilon(\frac{\pi}{2}+c)^2 e^{(\varepsilon(\frac{\pi}{2}+j\pi)^2 + c)\eta}e(\eta)d\eta\right) \right)\, d\tau \le 0\bigg\},\label{eq:t_M}
\end{align}
is larger than zero. Meanwhile, considering that $w(x,0)$ in \eqref{acute_theta_j} is mismatched with the actual value because of the estimation error of the unknown parameters $\lambda$ and $b$, now the design parameter $M$,  is piecewise-constant with a single discontinuity at $t=t_{\rm M}$, should satisfy

\begin{equation}\label{eq:M1}
    M>0~\&~M \neq \frac{-r(1)Y(0)- \hat{p}(1,0)}{\operatorname{sign}(\vartheta(y_1(0),0))},~ t\in[0,t_{\rm M}),
\end{equation}
and
\begin{equation} \label{eq:M2}
    \begin{cases}
        &M > \frac{\underset{t\geq t_{\rm M}}{\sup}\left| 2\sum_{j=0}^{+\infty} e^{-\varepsilon(\frac{\pi}{2}+j\pi)^{2}t} \acute{\theta}_j \right|}{1 - 2\sum_{j=0}^{+\infty} e^{-\varepsilon(\frac{\pi}{2}+j\pi)^{2}t_{\rm M}} \frac{(-1)^{j}}{\frac{\pi}{2}+j\pi}}, t\ge t_{\rm M},~\text{if}~t_{\rm M} \geq t_1,\\
        &M > \underset{\underline{\lambda}\leq \lambda \leq \overline{\lambda},0<\underline{b}\leq b \leq \overline{b}}{max}\frac{\underset{t\geq t_{\rm M}}{\sup}\left| 2\sum_{j=0}^{+\infty} e^{-\varepsilon(\frac{\pi}{2}+j\pi)^{2}t} \acute{\theta}_j \right|}{1 - 2\sum_{j=0}^{+\infty} e^{-\varepsilon(\frac{\pi}{2}+j\pi)^{2}t_{\rm M}} \frac{(-1)^{j}}{\frac{\pi}{2}+j\pi}}, t\ge t_{\rm M}, \\
        &~\text{if}~t_{\rm M} < t_1.
    \end{cases} 
\end{equation}
 The equation \eqref{eq:M1} can ensure achieving fast identification of the unknown parameters at the first triggering time $t_1$ defined by \eqref{eq:ti}, as will be shown in the next lemma.
 \begin{remark}
     {\rm We define $\dot M(t_{\rm M})$  as the right derivative, hence $\dot M=0$ for all time. Alternatively, one may choose $M$ sufficiently large in practice to avoid this discontinuity at $t=t_{\rm M}$, at the expense of increased conservativeness.}
 \end{remark}
\begin{lema}\label{lem:iden}
      The finite-time exact identification of the unknown parameters is achieved at the first update time instant $t_1$, i.e.,
      \begin{align}
          \hat{\theta}(t) \equiv \theta,~~\forall t\ge t_1.
      \end{align}
\end{lema}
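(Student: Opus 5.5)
The strategy is the usual BaLSI argument. The true parameter $\theta$ always belongs to the feasible set in \eqref{eq_5_19}: by \eqref{eq_5_7}, $h_{i,\bar n}(\theta)=0$, so \eqref{eq:zG} holds for all $\bar n$, and $\theta\in\Theta$ by Assumption \ref{assum_bound}. The estimator picks the feasible point closest to $\hat\theta(t_i)$. It therefore suffices to show that the feasible set at $i=0$ is the singleton $\{\theta\}$, which gives $\hat\theta(t_1)=\theta$. For every later update $i\ge1$, $\theta$ is still feasible and already equals $\hat\theta(t_i)$, so it minimizes $|\ell-\hat\theta(t_i)|^2$ with value zero. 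Induction then gives $\hat\theta(t)\equiv\theta$ for all $t\ge t_1$.

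Because $G_{\bar n}$ in \eqref{eq:Gbarn} is diagonal, uniqueness holds exactly when (a) $Q_{\bar n,1}(\mu_1,t_1)\neq0$ for some $\bar n$, and (b) $Q_2(\mu_1,t_1)\neq0$. By \eqref{Q_n1}, \eqref{Q_2}, \eqref{g_n} and \eqref{q_b}, and by continuity in $t$, failure of (a) means $g_{\bar n}(t,0)\equiv0$ on $[0,t_1]$ for all $\bar n$. Differentiating, $\int_0^1\sin(\pi\bar n x)u(x,t)dx\equiv0$ for all $\bar n$, hence $u[t]=0$ in $L^2$ for all $t\in[0,t_1]$. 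Similarly, failure of (b) means $u(0,t)\equiv0$ on $[0,t_1]$. The proof therefore reduces to showing that neither degenerate behaviour can occur on $[0,t_1)$. Note that (a) failing implies (b) fails by trace regularity, so showing $u(0,\cdot)\not\equiv0$ handles both.

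Suppose, for contradiction, that $u(0,t)=0$ on $[0,t_1]$. Since $u_x(0,t)=0$ by \eqref{eq_1_3}, $u$ has zero Cauchy data at $x=0$ for the parabolic equation \eqref{eq_1_2}. Holmgren/unique continuation for the 1-D heat equation then forces $u\equiv0$ on $[0,1]\times(0,t_1)$, and in particular $U_d(t)=u(1,t)=0$ on $[0,t_1)$. Now evaluate the certainty-equivalence control \eqref{U(t)}, built with $\hat\theta(t_0)$, at $t=0^+$ with $u\equiv0$. This gives $U_d(0)=\hat r(1)Y(0)+\operatorname{sign}(\vartheta(y_1(0),0))M+\hat p(1,0)$. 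This quantity vanishes only if $M=\frac{-\hat r(1)Y(0)-\hat p(1,0)}{\operatorname{sign}(\vartheta(y_1(0),0))}$, which is precisely excluded by \eqref{eq:M1}. (I read the $r(1)$ there as the estimate-based $\hat r(1)$ used by the controller.) This is a contradiction, so $u(0,\cdot)\not\equiv0$ and $u[\cdot]\not\equiv0$ on $[0,t_1]$. Hence $Q_2>0$, and $Q_{\bar n,1}>0$ for some $\bar n$, and the feasible set is $\{\theta\}$.

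The main obstacle is making the contradiction step rigorous. First, one needs enough regularity that $u(1,t)\to U_d(0)$ as $t\to0^+$, i.e., compatibility or continuity of the boundary trace. Second, the continuous-time terms in $U_d$ (the integral term and $\hat p(1,t)$, $\delta(t)$) must be handled on the whole interval, not just at $t=0$. If $u\equiv0$ and $Y$ evolves autonomously via $\dot Y=AY$, then $U_d(t)=\hat r(1)Y(t)+\delta(t)+\hat p(1,t)$ must vanish identically. The plan is to use only its value at $t=0$, together with right-continuity of $U_d$, to contradict \eqref{eq:M1}. If right-continuity at $0$ is problematic, the fallback is to use analyticity in $t$ of the resulting expression and derive the contradiction from its vanishing on an interval.
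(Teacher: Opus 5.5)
Your proposal is correct and follows the paper's route. You show by contradiction that $u(0,\cdot)\not\equiv 0$ on $[0,t_1]$: zero Cauchy data at $x=0$ forces $u\equiv 0$, hence $U_d\equiv 0$, which contradicts $U_d(0)\neq 0$ as guaranteed by \eqref{eq:M1}; this gives $Q_2\neq 0$ and $Q_{\bar n,1}\neq 0$ for some $\bar n$, so $G_{\bar n}$ is invertible, $\hat\theta(t_1)=\theta$, and feasibility of $\theta$ propagates this to all later updates. Your extra remarks (reading $r(1)$ as $\hat r(1)$, deriving (a) from (b), and passing to the limit $t\to 0^+$) only make explicit details the paper leaves implicit.
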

\begin{proof}
We first prove that $u(0,t)$ is not identically zero for $t \in [0,t_{1}]$ by contradiction. Supposing that $u(0,t) \equiv 0$ for $t \in [0,t_{1}]$, together with \eqref{eq_1_2} and \eqref{eq_1_3}, we have that $u(x,t) \equiv 0$ for all $x \in [0,1]$ and $t \in [0,t_{1}]$, which implies that $u(1,t) = U_a(t) \equiv 0$ for $t \in [0,t_{1}]$. Recalling \eqref{eq:M1}, \eqref{U_a(t)}, \eqref{eq:delta}, we have that $U_a(0)\neq 0$:  contradiction. Therefore, we have that $u(0,t)$ is not identically zero for $t \in [0,t_{1}]$. It means that that exists $\bar n$ such that $Q_{\bar{n},1}(0,t_{1}) \neq 0$, and $Q_{2}(0,t_{1}) \neq 0$ according to \eqref{q_b}, \eqref{g_n}, \eqref{Q_n1}, \eqref{Q_2}. It implies that $G_{\bar{n}}(0,t_{1})$ is invertible recalling \eqref{eq:Gbarn}. Therefore, the output of \eqref{eq_5_19} at $t_1$, i.e., $\hat\theta(t_1)$, is indeed the true value $\theta$ considering \eqref{eq:zG}. According to \eqref{eq:hattheta}, \eqref{eq_5_19}, \eqref{eq:zG}, We know that $\hat \lambda(t_{i+1})$ is equal to either $\lambda$ or $\hat \lambda(t_i)$, and also $\hat b(t_{i+1})$ is equal to either $b$ or $\hat b(t_i)$. We obtain from $\hat\theta(t_1)=\theta$ that $\hat \theta(t)=\theta$ for $t\in[t_1,\infty)$, considering \eqref{eq:ti} that implies $\lim_{i\to \infty}(t_i)\to\infty$. More detailed proof of the exact identification of the unknown parameter in the parabolic PDE-ODE system by BaLSI can be seen in \cite{wang2022event}.
\end{proof}
Finally, the safe adaptive controller $U_a(t)$ is obtained as
\begin{equation} \label{U_a(t)}
    U_a(t) = \int_{0}^{1} \hat{k}(1,y)u(y,t)dy + r(1)Y(t) + \delta(t) + \hat{p}(1,t),
\end{equation}
where $\hat{k}(1,y)$,  $\hat{p}(1,t)$ are obtained by submitting the estimates $\hat\lambda,\hat b$ into the ${k}(1,y)$,  ${p}(1,t)$ defined in \eqref{eq_1_14}, \eqref{eq:p}, and where $\delta(t)$ is given by \eqref{eq:delta} with $M$ satisfying \eqref{eq:M1}, \eqref{eq:M2}.

\subsection{Results of safe adaptive control}

\begin{thme} \label{thme_adaptive}
    For all initial data $u[0]\in L^2(0,1)$, $Y(0)\in \mathbb R^n$, with the design parameters $\kappa_i$, $i = 1,2,\ldots$ satisfying \eqref{kappa1} and $\kappa_n \leq c$, the closed-loop system including the plant \eqref{eq_1_1}--\eqref{eq_1_4} with the safe adaptive controller \eqref{U_a(t)} has the following properties.
    \begin{enumerate}
   \item There exist unique mappings  $u\in C^0(\mathbb R_+;L^2(0,1))\bigcap C^1(J\times[0,1])$ with $u[t]\in C^2([0,1])$, $Y\in C^0(\mathbb R_{+};\mathbb R^n)$, where $J=\mathbb R_+\backslash\{t_1\}$.
    
        \item All plant states are convergent to zero, i.e.,  $$\lim_{t\to\infty}(\|u_{x}(\cdot,t)\|^{2}+\|u(\cdot,t)\|^{2} + |Y(t)|^{2})=0.$$
        \item Safety is ensured in the sense that if $h(y_1(0),0) > 0$, then $h(y_1(t),t) \geq 0$ for all $t \geq 0$; if $h(y_1(0),0) \leq 0$, then there exists a finite time $t_a$ ($t_a>t_1$) such that $h(y_1(t),t) \geq 0$ for all $t \geq t_a$, where $t_a$ ($t_a>t_1$) can be assigned by users.
    \end{enumerate} 
\end{thme}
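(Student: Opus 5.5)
We split the proof into the three claimed properties and, where possible, reduce each to Theorem \ref{Theorem1} by using Lemma \ref{lem:iden}.

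\emph{Well-posedness (property 1).} We argue interval by interval. On $[0,t_1)$ the estimate $\hat\theta(t_0)$ is constant. The plant \eqref{eq_1_1}--\eqref{eq_1_4} under the boundary feedback \eqref{U_a(t)} is then a linear parabolic PDE coupled with an ODE. Its boundary input is a bounded functional of $u[t]$ and $Y(t)$, plus the smooth signals $\delta(t)$ and $\hat p(1,t)$. We shift the boundary datum into the domain with a lifting function and write the problem as an abstract evolution equation on $L^2(0,1)\times\mathbb R^n$. Analytic-semigroup theory for the Neumann--Dirichlet Laplacian, together with the smoothing of parabolic semigroups, then gives a unique mild solution. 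This solution is classical for $t>0$, with $u[t]\in C^2([0,1])$. At $t_1$ the estimate jumps to $\theta$. We restart from $u[t_1]\in L^2$ and obtain a solution on $[t_1,\infty)$ with the nominal gains; this solution is global because its closed loop is the nominal one. Continuity in $L^2$ holds across $t_1$, but $C^1$ regularity can fail at $t_1$ because $U_a$ jumps there. This explains why the statement uses $J=\mathbb R_+\setminus\{t_1\}$.

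\emph{Convergence (property 2).} By Lemma \ref{lem:iden}, $\hat\theta(t)=\theta$ for all $t\ge t_1$, so $e(t)=0$ on $[t_1,\infty)$. The state at $t_1$ is finite, since on a finite interval the linear system cannot blow up. For $t\ge t_1$ the closed loop is exactly the nominal one of Theorem \ref{Theorem1}, except that $M$ is piecewise constant. $M$ is constant after $t_{\rm M}$, and $\dot M=0$ in the sense of the Remark, so $\delta$ still satisfies $\dot\delta=-c\delta$ there. We repeat the Lyapunov argument \eqref{lyapunov}--\eqref{xi}, started at $t_1$ or at $\max\{t_1,t_{\rm M}\}$. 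This gives exponential decay of $\Xi$. The recursive argument through Assumption \ref{assumption_1} and the inverse transformation \eqref{eq:inv} then yields convergence of $\|u\|$, $\|u_x\|$ and $|Y|$.

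\emph{Safety (property 3).} As in part (2) of the proof of Theorem \ref{Theorem1}, it suffices to show that \eqref{title_h_n(0)} stays positive for all $t$, and then go down the cascade with $A_h$ and \eqref{kappa1} to get $h_1>0$. We split at $t_{\rm M}$.
\begin{itemize}
\item On $[0,t_{\rm M})$, the triggering condition \eqref{eq:t_M} takes the worst case over the parameter box $\Theta$, so the non-$M$ part is positive regardless of $e$. The $M$ part is nonnegative by the fact \eqref{fact_1} (Appendix).
\item For $t\ge t_{\rm M}$ there are two cases.
\begin{itemize}
\item If $t_{\rm M}\ge t_1$, then $e\equiv 0$ after $t_1$ and the true parameters are known. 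The nominal choice \eqref{eq:M2} then makes the $M$-term dominate the $\acute\theta_j$ sum, as in the Appendix proof of $h_n>0$.
\item If $t_{\rm M}<t_1$, we use the robust version of \eqref{eq:M2}, maximized over $\Theta$. This bound holds whatever the true value is, and on $[t_{\rm M},t_1)$ the $e$-contribution is absorbed into the worst-case estimate.
\end{itemize}
\end{itemize}
With $h_1>0$ established, the same two cases as before, via $\sigma$ from \eqref{sigma}, give safety for all $t$ or after $t_a$. We need $t_a>t_1$ so that $\sigma$ vanishes only after identification has occurred.

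\emph{Main obstacle.} We expect the hardest step to be the interval $[t_{\rm M},t_1)$ in the case $t_{\rm M}<t_1$. There the error $e(t)$ is nonzero and enters \eqref{title_h_n(0)} through the convolution terms. The term $\acute\theta_j$ is computed from a mismatched $w(x,0)$. Moreover, the robust bound on $M$ in \eqref{eq:M2} does not explicitly contain $e$. We would first bound $e$ on $[0,t_1]$ by a constant times the state, using $|\hat\theta-\theta|$ bounded by the diameter of $\Theta$ and kernel continuity in the parameters. We would then try to show that the $e$-terms are dominated by the margin that the strict inequality in \eqref{eq:M2} leaves. If that fails, we would enlarge $M$ explicitly to absorb them.
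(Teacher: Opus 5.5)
Your decomposition is the paper's: property 1 by restarting at $t_1$ (the paper just cites Corollary 1 of \cite{wang2022event}), property 2 from Lemma \ref{lem:iden} plus the nominal Lyapunov argument after $t_1$, and property 3 from the trigger \eqref{eq:t_M}, the fact \eqref{fact_1} and the choice \eqref{eq:M1}--\eqref{eq:M2}. The paper's proof of property 3 is only an appeal to Theorem \ref{Theorem1}, so it does not supply the step you flag either. Still, your proposal does not prove safety for $t\ge t_{\rm M}$.

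\textbf{Case $t_{\rm M}<t_1$.} Your bullet says the $e$-contribution on $[t_{\rm M},t_1)$ ``is absorbed into the worst-case estimate''. But the second line of \eqref{eq:M2} maximizes over $\Theta$ only the $\acute\theta_j$-ratio and contains no $e$-term. Your fallback (bound $|e|$ by the state, then enlarge $M$) fails for a structural reason. The trajectory is driven by $\delta=\operatorname{sign}(\vartheta)Me^{-ct}$, so $e=\int_0^1(k-\hat k)(1,y)u(y,t)\,dy+(p-\hat p)(1,t)$ grows linearly with $M$. Hence the $e$-convolution terms of \eqref{title_h_n(0)} scale with $M$ exactly as the good term $bM\int_0^t e^{(\kappa_n-c)\tau}S(\tau)\,d\tau$ does, with $S$ from \eqref{S(tau)}. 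The sign of the net coefficient of $M$ is fixed by the parameter mismatch, not by the size of $M$.

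\textbf{Case $t_{\rm M}\ge t_1$.} This case is not closed either, for two reasons.
\begin{enumerate}
\item $e\equiv0$ after $t_1$ does not remove the memory terms of \eqref{title_h_n(0)}. The inner integrals $\int_0^\tau(\cdots)e(\eta)\,d\eta$ freeze at $\int_0^{t_1}$ but persist, while the first line of \eqref{eq:M2} dominates only the $\acute\theta_j$-series.
\item Since $M$ jumps at $t_{\rm M}$, $v=w-\delta$ jumps there; the Remark's right-derivative convention discards a Dirac mass in $\dot\delta$. Redoing Appendix \ref{appendix:solving_w} with the jump, the increment of $M$ enters multiplied by $1-2\sum_{j\ge0}\frac{(-1)^j}{\frac{\pi}{2}+j\pi}e^{-\varepsilon(\frac{\pi}{2}+j\pi)^2(\tau-t_{\rm M})}$. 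This is the denominator of \eqref{eq:M2} with $t_{\rm M}$ replaced by $\tau-t_{\rm M}$, and it vanishes at $\tau=t_{\rm M}$. So \eqref{eq:M2} gives no positivity of the integrand just after $t_{\rm M}$.
\end{enumerate}

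\textbf{A possible repair.} It avoids the bad case instead of estimating through it.
\begin{enumerate}
\item Lemma \ref{lem:iden} only needs $t_1>0$. So trigger the first update no later than the first time the worst-case margin in \eqref{eq:t_M} drops to $\frac{1}{2}h_n(\underline{z}_n(0),0)$. This forces $t_1<t_{\rm M}$ and $e\equiv0$ from $t_{\rm M}$ on. Note that your ``worst case'' reading of \eqref{eq:t_M} needs a $\min$ over $\Theta$ there, not the printed $\max$.
\item Restart the modal expansion at $t_{\rm M}$ from $w[t_{\rm M}]$. It is computable with the identified kernels and independent of the new $M$, and its non-$M$ part is bounded by a multiple of $\sup_x|w(x,t_{\rm M})|$ by the maximum principle.
\item The margin $e^{\kappa_n t_{\rm M}}h_n(\underline{z}_n(t_{\rm M}),t_{\rm M})\ge bM\int_0^{t_{\rm M}}e^{(\kappa_n-c)\tau}S(\tau)\,d\tau>0$ covers a short window $[t_{\rm M},t_{\rm M}+\tau_0]$. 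After that window, choose the new $M$ against the above bracket evaluated at $\tau-t_{\rm M}=\tau_0$.
\end{enumerate}

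\textbf{Property 1 and finiteness at $t_1$.} In property 1, and in ``the linear system cannot blow up'', you treat $\hat p(1,t)$ as an exogenous smooth signal. It is actually a feedback term built from $-f(\underline{z}_n(t),t)/\vartheta$ and, through \eqref{eq:p}, its time derivatives of all orders. It is therefore generally nonlinear and not a bounded functional of $(u[t],Y(t))$, so the semigroup argument does not apply as stated. Finiteness of the state at $t_1$ should be taken from property 1, not from linearity.
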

\begin{proof}
    \textit{(1)} The property is straightforward to obtain following the proof of Corollary 1 in \cite{wang2022event}.
    
    \textit{(2)} According to the Lemma \ref{lem:iden}, we directly have that the parameter estimation error $\tilde{\theta} = 0$, $\forall t \geq t_{1}$. Considering the Lyapunov function \eqref{lyapunov} for $t\in[0,t_1]$ and $t\in[t_1,\infty]$, respectively, choosing the analysis parameter $a_{1}$ and $a_{2}$ as \eqref{a1} and \eqref{a2}, where the upper bound $\bar{b}$ is used to replace the unknown $b$ in \eqref{a1} and \eqref{a2}. \textit{i.e.}, $a_{1} > \max \left\{\frac{3|P\bar{B}|^2\bar\vartheta^2}{c\lambda_{\min}(Q)}-\frac{\varepsilon}{c},0\right\}$ and $a_{2} > \frac{3|P\bar{B}|^2\bar\vartheta^2}{c\lambda_{\min}(Q)}$ with $\bar{B} = [0,0,\dots,\bar{b}]^{T}$, following the proof below \eqref{dot_V} in the proof of property 1 in Theorem \ref{Theorem1}, together with the fact that $V(t)$ \eqref{lyapunov} is continuous at $t=t_1$, which can be known from the property 1,  we thus obtain the property 2 in this theorem.

    \textit{(3)} Similarly, like the proof of property 2 in Theorem \ref{Theorem1}, together with $M$ satisfying \eqref{eq:M1} and \eqref{eq:M2}, we have $h_1(z_1(t), t) > 0$ for all $t \geq 0$. Through the same process as the safety analysis in the two cases $h(y_1(0),0)>0$, $h(y_1(0),0)\le0$ at the end of the proof of Theorem \ref{Theorem1}, property 3 is thus obtained. 

    The proof of this theorem is complete.
\end{proof}
\subsection{Simulation}
The considered simulation model is \eqref{eq_1_1}--\eqref{eq_1_4} with the parameters $A = [a_{1,1},1;a_{2,1},a_{2,2}] = [0,1;2,-1]$, $B = [0,b]^T = [0,5]^T$, $\varepsilon = 1$, $\lambda = 10$. The known bounds of the unknown parameters $\lambda$, $b$ are set as $\underline{\lambda} = 8$, $\bar{\lambda} = 12$, $\underline{b} = 3$, $\bar{b} = 7$. The simulation is conducted based on the finite difference method with a time step of 0.001 and a space step of 0.05. Some tips on the implementation of the parameter identifier \eqref{eq_5_19} can be found in \cite{wang2022event}. We conduct the simulation in the following two cases, each with a different safe barrier function.

\textit{Case 1}: Safe condition $h(y_1(t),t)$ is defined as $h(y_1(t),t) = y_1(t)$. It means that $\vartheta = 1$. For $\sigma(t)$ in \eqref{sigma}, we choose $t_a = 1$ and $\beta = 1$. The initial values are defined as  $y_2(0) = 0$, $u(x,0) = x^{2}\sin(4\pi x)$, $\hat{\lambda}_{1}(0)=8$, $\hat{b}_{1}(0) = 7$, and $y_1(0)$ is defined as two subcases: 
\begin{itemize}
    \item  $y_1(0) = 10$ for the initially safe subcase;
   \item  $y_1(0) = -10$ for the initially unsafe subcase.
\end{itemize}
Based on the simulation model given above, we construct the safe adaptive controller $U_a(t)$ by choosing the design parameters as $\kappa_1 = 23$, $\kappa_2 = 3$, $M = 3000$. The arbitrary positive design parameters are set as $T = 0.5$, $\bar{n} = 1$, $\tilde{N} = 12$. 

\begin{figure}[htbp]
    \centering
    \begin{subfigure}[b]{0.48\columnwidth}
        \centering
        \includegraphics[width=\linewidth]{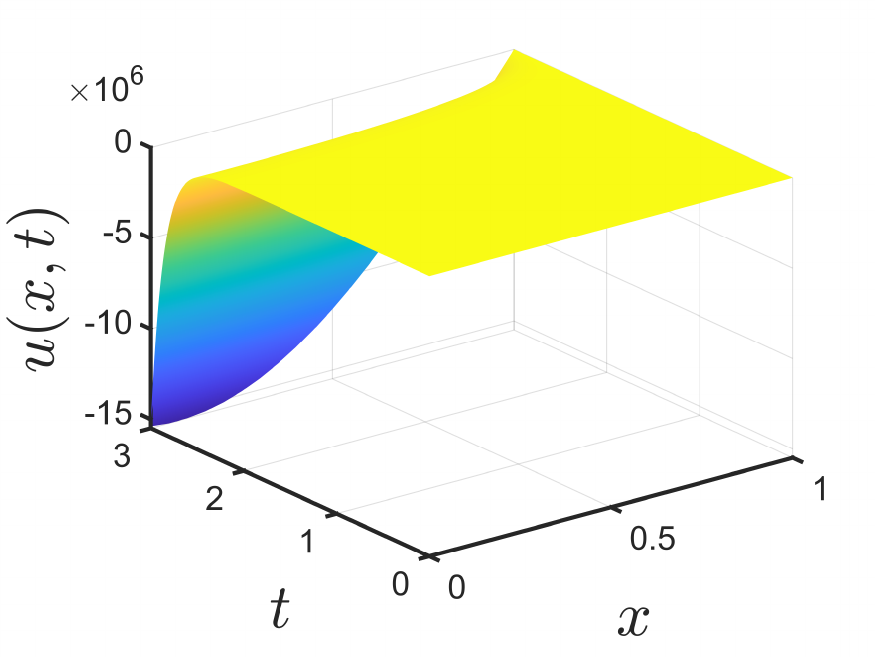}
        \subcaption{$u(x,t)$}
        \label{fig:u_open_unsafe}
    \end{subfigure}
    \hfill
    \begin{subfigure}[b]{0.48\columnwidth}
        \centering
        \includegraphics[width=\linewidth]{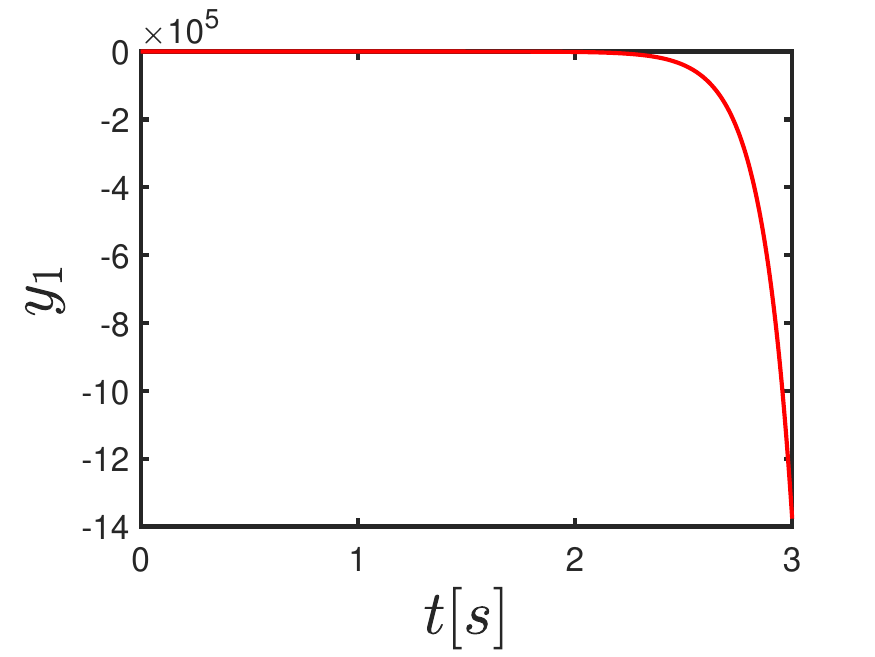}
        \subcaption{$y_1(t)$}
        \label{fig:y1_open_unsafe}
    \end{subfigure}
    \caption{Results of $u(x,t)$ and $y_1(t)$ in open-loop system}
    \label{fig:open_loop}
\end{figure}

The simulation results of $u(x,t)$ and $y_1(t)$ in the open-loop system are presented in Fig \ref{fig:u_open_unsafe} and \ref{fig:y1_open_unsafe}, which imply that the simulation model is open-loop unstable. The simulation results of the closed-loop system, under the nominal safe control and safe adaptive control, respectively, are given below.

\textit{(i)} (initially safe $y_1(0) = 10$): The result of the output state $y_1(t)$ and the other state $y_2(t)$ in the ODE are shown in Figs \ref{fig:y1_adaptive_safe} and \ref{fig:y2_adaptive_safe}, where the red solid line denotes the result with the safe adaptive controller $U_{a}$, while the blue solid line denotes the result with the nominal safe control $U$. We can see that both the safe adaptive controller and the nominal safe controller can not only ensure that the output $y_1$ converges to zero, but also keep $y_1(t)$ within the safe region throughout the entire process. The parameter estimates converge to the true values in the first triggering time, as shown in Fig \ref{fig:estimation_safe}, where the tiny difference between the final estimate $\hat{\lambda}$ and its true value comes from the approximation error of integration as explained in \cite{wang2025output}. The results of $u(x,t)$ under the safe adaptive controller and nominal safe controller are shown in Figs \ref{fig:Safe_adaptive_control_safe} and \ref{fig:Nominal_safe_control_safe}, respectively. It is observed that the system state 
$u(x,t)$ exhibits a noticeable jump at the first triggering time. This occurs because, at this triggering instant, the estimated parameters are updated, from the initial estimates to the true values, which causes the control input 
$U(t)$, in particular the term involving 
$p(1,t)$, to change abruptly. Nevertheless, after this, $u(x,t)$ rapidly converges to zero.

\begin{figure}[htbp]
    \centering
    \includegraphics[width=1\linewidth]{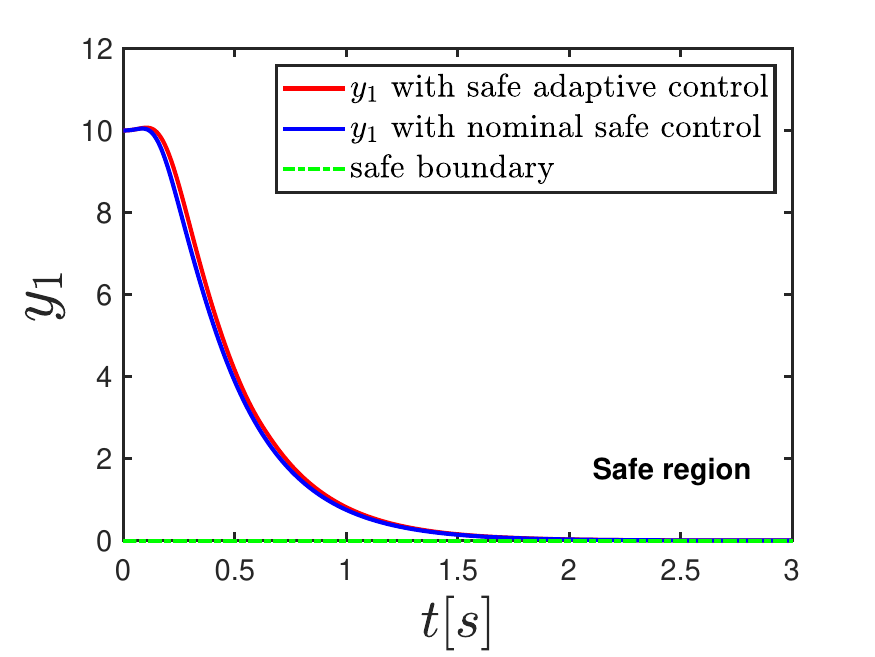}
    \caption{The trajectory of $y_1(t)$ when the initial condition is safe}
    \label{fig:y1_adaptive_safe}
\end{figure}

\begin{figure}[htbp]
    \centering
    \includegraphics[width=1\linewidth]{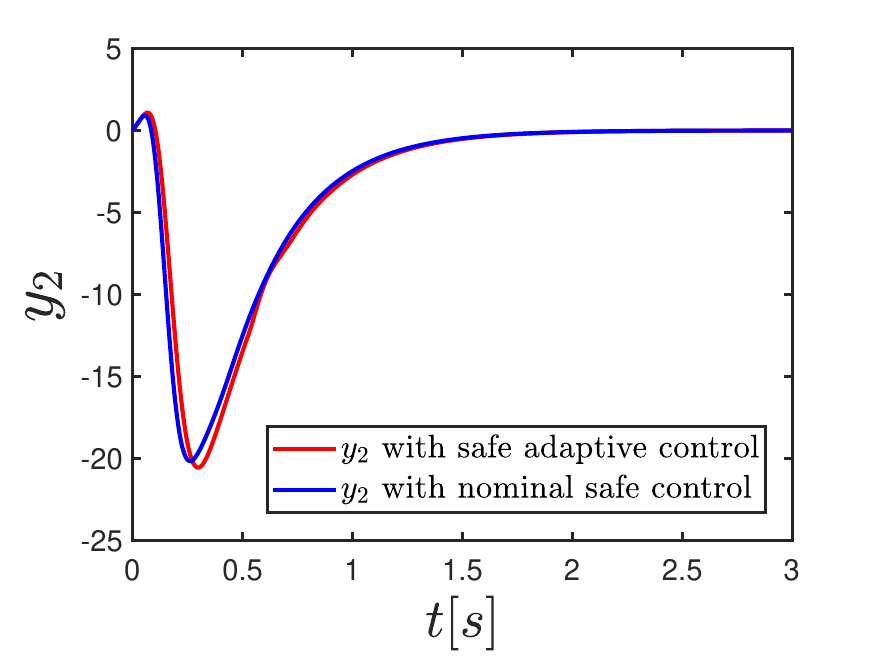}
    \caption{The trajectory of $y_2(t)$ when the initial condition is safe}
    \label{fig:y2_adaptive_safe}
\end{figure}

\begin{figure}[htbp]
    \centering
    \includegraphics[width=1\linewidth]{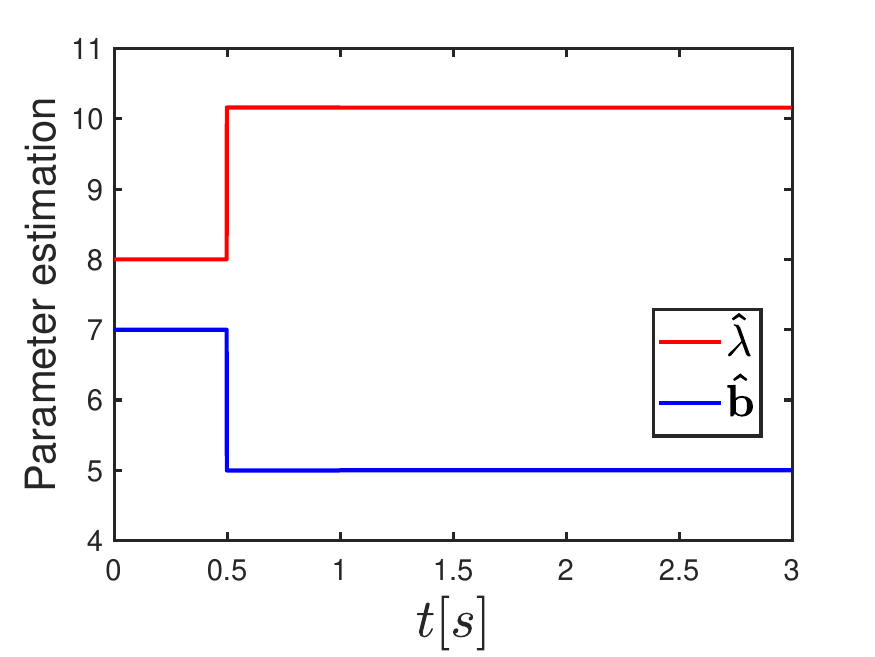}
    \caption{Estimation of parameters when the initial condition is safe}
    \label{fig:estimation_safe}
\end{figure}

\begin{figure}[htbp]
    \centering
    \begin{subfigure}[b]{0.48\columnwidth}
        \centering
        \includegraphics[width=\linewidth]{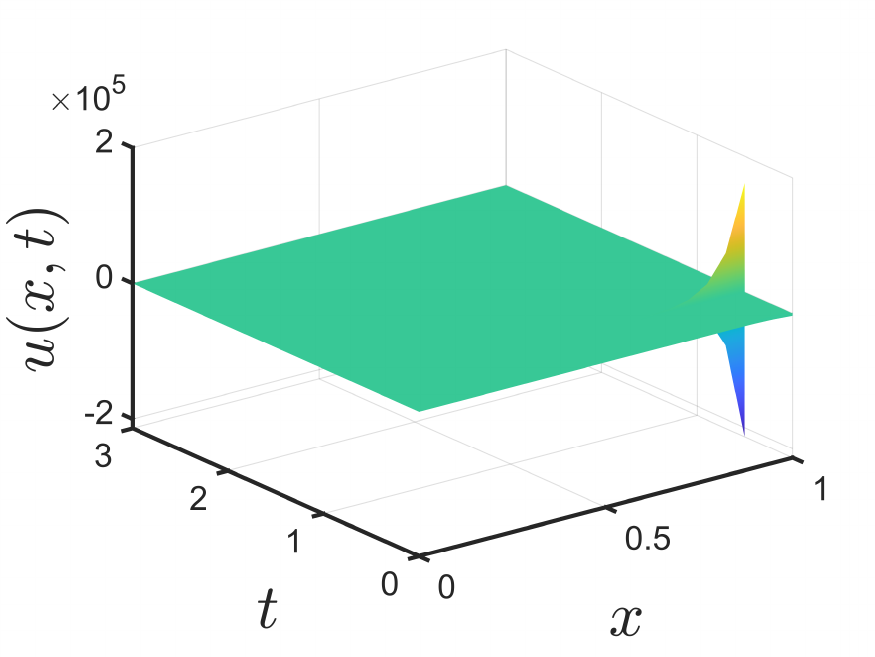}
        \subcaption{$u(x,t)$ with safe adaptive controller}
        \label{fig:Safe_adaptive_control_safe}
    \end{subfigure}
    \hfill
    \begin{subfigure}[b]{0.48\columnwidth}
        \centering
        \includegraphics[width=\linewidth]{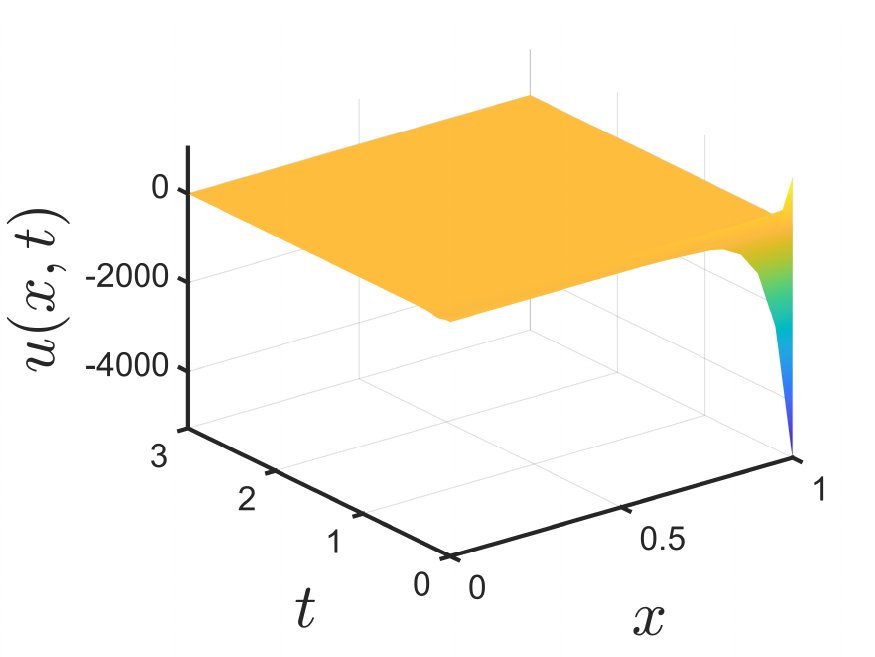}
        \subcaption{$u(x,t)$ with nominal safe controller}
        \label{fig:Nominal_safe_control_safe}
    \end{subfigure}
    \caption{Results of $u(x,t)$ when the initial condition is safe}
    \label{fig:u_safe}
\end{figure}

\textit{(ii)} (initially unsafe $y_1(0) = -10$): In this subcase, the initial conditions are the same as in subcase \textit{i}, except for $y_1(0)$, which is unsafe now. Under both nominal safe and safe adaptive controllers, all system states converge to zero fast. The true values of unknown parameters are also identified at the first triggering instant in the adaptive safe controller.
These results are similar to Figs \ref{fig:y1_adaptive_safe}--\ref{fig:u_safe}, which are not represented here to avoid repetition. The trajectory of the output state $y_1(t)$ is shown in Fig. \ref{fig:y1_adaptive_unsafe}, from which we can see that both the safe adaptive controller and the nominal safe controller can not only ensure that the output $y_1$ converges to zero, but also guide it back to the safe region (i.e., $y_1(t)\ge 0$), from the unsafe initial position $y_1(0) = -10$, in about 0.6 seconds (within the recovery time $t_a = 1s$), and then keep it in the safe region all the time.

\begin{figure}[htbp]
    \centering
    \includegraphics[width=1\linewidth]{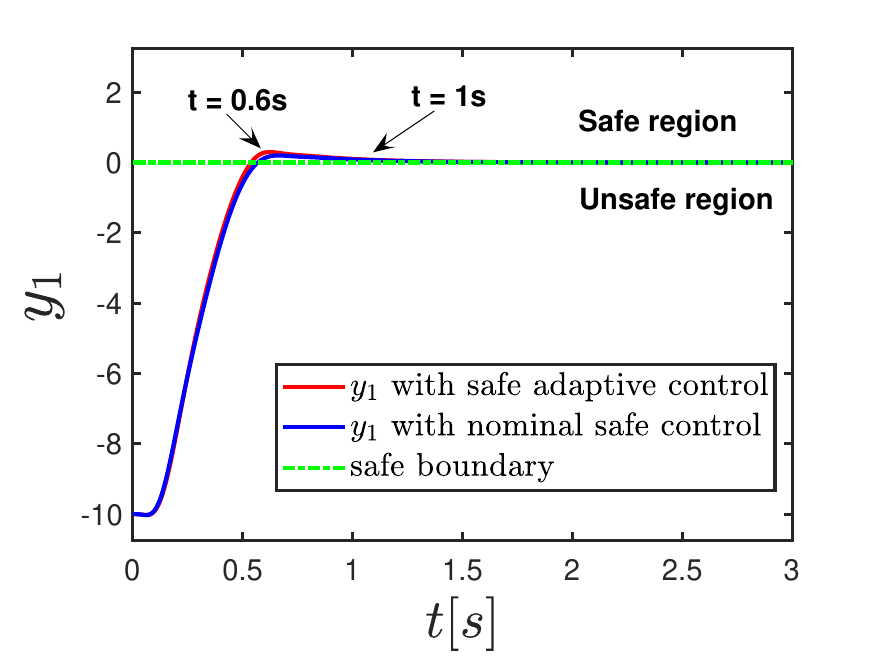}
    \caption{The trajectory of $y_1(t)$ when the initial condition is unsafe}
    \label{fig:y1_adaptive_unsafe}
\end{figure}

\textit{Case 2}: The safe condition $h(y_1(t),t)$ is defined as $h(y_1(t),t) = ae^{-dt} - y_1(t)$. This implies that $\vartheta = -1$ and the safe region is $y_1(t) \le ae^{-dt}$. The parameters are chosen as $a=14$ and $d=3$.  Here, we select $\beta = 4$ and $M = 8000$. Two subcases for the initial condition $y_1(0)$ are considered:  
\begin{itemize}
    \item $y_1(0) = 10$ for the initially safe subcase;
    \item $y_1(0) = 18$ for the initially unsafe subcase.
\end{itemize}
Other conditions remain identical to Case 1. To avoid repetition, we only present the trajectory of the output state $y_1(t)$ in Fig. \ref{fig:y1_general}, since other results are similar to those in Case 1. Fig. \ref{fig:y1_general} shows that both the safe adaptive controller and the nominal safe controller ensure that the output $y_1$ converges to zero. Moreover, when the initial value of $y_1$ locates at the unsafe region, i.e., $y_1(0)=18 > a$, the proposed controller will guide it back to the safe region, i.e., $y_1(t) \le ae^{-dt}$, within approximately 0.5 seconds (which is less than the prescribed safety recovery time $t_a = 1\,\text{s}$), and keep it within the safe region thereafter; when $y_1$ is initially safe, i.e., $y_1(0) =10< a$, the safety $y_1(t) \le ae^{-dt}$ will be guaranteed all the time.

\begin{figure}[htbp]
    \centering
    \includegraphics[width=1\linewidth]{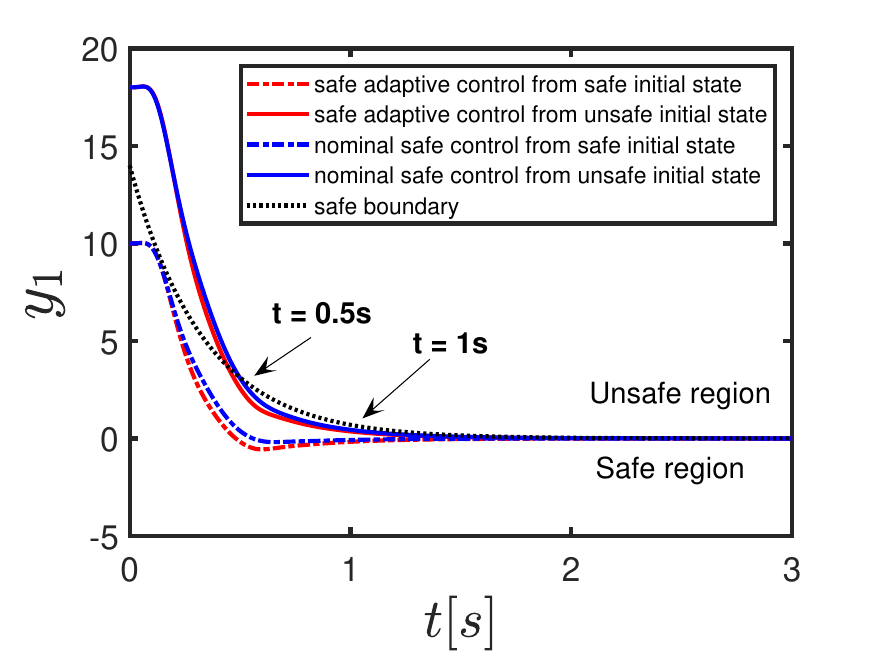}
    \caption{The trajectory of $y_1(t)$}
    \label{fig:y1_general}
\end{figure}
\section{Conclusion and Future Work}
This paper presents a safe adaptive control strategy for a class of parabolic PDE-ODE cascade systems subject to parameter uncertainties in both PDE and ODE subsystems. Under the proposed safe adaptive controller, the system output is guaranteed to remain safe at all times if the initial state is safe. Otherwise, the controller will regulate the system output into the safe region within a user-prescribed time. Furthermore, all system states are ensured to converge to zero. The effectiveness of the proposed control scheme is validated through numerical simulations.

In our future work, a practical physical application will be considered, incorporating the observer designs for system states and external disturbances.

\section*{Appendix}

\appendix
\renewcommand{\thesubsection}{\Alph{subsection}}
\counterwithin{equation}{subsection}
\renewcommand{\theequation}{\Alph{subsection}\arabic{equation}}

\subsection{Solution to \eqref{eq:ker1}--\eqref{eq:ker3}}\label{appendix:sol}
According to \cite{li2020adaptive}, the solution to \eqref{eq:ker1}--\eqref{eq:ker3} is given by
\begin{equation}\label{eq_1_14}
    k(x,y) = \sum_{n=0}^{\infty} \triangle G^n (\xi,\eta),   
\end{equation}
where
\begin{align}
    \xi &= x+y,\\
    \eta &= x-y,\\
    G(\xi,\eta) &= \frac{-(\lambda + c)}{4\varepsilon}(\xi + \eta) + \int_{0}^{\eta} r(m)dm  \frac{B}{\varepsilon} \notag\\
    &+\frac{\lambda + c}{2\varepsilon} \int_{0}^{\eta} \left[ \int_{0}^{m} G(m,\tau)d\tau \right]dm \notag\\
    &+\frac{\lambda + c}{4\varepsilon}\int_{\eta}^{\xi} \left[ \int_{0}^{\eta} G(m,\tau)d\tau \right]dm, \\
    G^0(\xi,\eta) &= 0,\\
    G^{n+1}(\xi,\eta) &= \frac{-(\lambda + c)}{4\varepsilon}(\xi + \eta) + \int_{0}^{\eta} r(m)dm \cdot \frac{B}{\varepsilon} \notag\\
    &+\frac{\lambda + c}{2\varepsilon} \int_{0}^{\eta} \left[ \int_{0}^{m} G^{n}(m,\tau)d\tau \right]dm \notag\\
    &+\frac{\lambda + c}{4\varepsilon}\int_{\eta}^{\xi} \left[ \int_{0}^{\eta} G^{n}(m,\tau)d\tau \right]dm, \\
    \triangle G^n(\xi,\eta) &= G^{n+1}(\xi,\eta) - G^n(\xi,\eta),
\end{align}
and the solution of \eqref{eq:ker4}, \eqref{eq:ker5} is given by 
\begin{equation}
    r(x) = \begin{pmatrix}
        -K^{T} & 0
    \end{pmatrix} e^{Dx} \begin{pmatrix}
        I\\0
    \end{pmatrix}, D = \begin{pmatrix}
        0 & \frac{A+cI}{\varepsilon}\\
        I & 0
    \end{pmatrix},
\end{equation} 
where $I$ is the identity matrix. The solution of \eqref{eq:ker6}, \eqref{eq:ker7} can be obtained by Taylor series as
\begin{equation} \label{eq:p} 
    \begin{split}
        p(x,t) = p(0,t) + p_x(0,t)x + p^{(2)}_{x}(0,t)\frac{x^2}{2!} \\
        + \cdots + p^{(m)}_{x}(0,t)\frac{x^m}{m!} + \cdots.
    \end{split}
\end{equation}
 We can obtain the value of $p^{(m)}_{x}(0,t)$ for $m = 2,3,\ldots$ from \eqref{eq:ker6} and its derivatives with respect to $x$, together with inserting $x = 0$ and recalling \eqref{eq:ker7}.
\subsection{Solving $w(x,t)$ in \eqref{eq_1_23}--\eqref{eq_1_25}} \label{appendix:solving_w}   
The detailed process for solving \(w(x,t)\) is shown as follows. Considering \eqref{w_to_v}, \eqref{eq_1_23}--\eqref{eq_1_25}, we have 
\begin{align}
    v_{t}(x,t) &= \varepsilon v_{xx}(x,t) - cv(x,t) - c\delta(t) - \dot{\delta}(t), \label{eq_6_1}\\
    v_{x}(0,t) &= 0, \label{eq_6_2}\\
    v(1,t) &= 0. \label{eq_6_3}
\end{align}
Noticing that $c\delta+\dot \delta$ in \eqref{eq_6_1} is in fact zero, because $ \delta(t) = \operatorname{sign}(\vartheta)Me^{-ct}$. We nevertheless keep it in the derivation and solve $w(x,t)$ in this general form to avoid repetitiveness, because this will be reused when solving
$w(x,t)$ in the subsequent safe adaptive controller design.

By eigenfunction expansion, we can express $v(x,t)$ and $c\delta(t) + \dot{\delta}(t)$ as
\begin{align}
    v(x,t) &= \sum_{j=0}^{\infty} v_{j}(t) \cos(\frac{\pi}{2} + j\pi)x, \label{eq_6_4}\\
    c\delta(t) + \dot{\delta}(t) &= \sum_{j=0}^{\infty} f_{j}(t) \cos(\frac{\pi}{2} + j\pi)x, \label{eq_6_5}
\end{align}
where 
\begin{align}
    f_{j}(t) &= 2\int_{0}^{1} (c\delta(t) + \dot{\delta}(t)) \cos(\frac{\pi}{2} + j\pi)x dx \notag\\
    &= \frac{2(-1)^{j}}{\frac{\pi}{2} + j\pi}(c\delta(t) + \dot{\delta}(t)).
\end{align}
Applying \eqref{eq_6_4}, \eqref{eq_6_5}, the left side of \eqref{eq_6_1} can then be expressed as
\begin{equation}
    v_{t}(x,t) = \sum_{j=0}^{\infty} \dot{v}_{j}(t) \cos(\frac{\pi}{2} + j\pi)x,
\end{equation}
and the right side of \eqref{eq_6_1} can be obtained as
\begin{align}
    &\varepsilon v_{xx}(x,t) - cv(x,t) - c\delta(t) - \dot{\delta}(t) \notag\\
    &= \sum_{j=0}^{\infty} \left(-\varepsilon (\frac{\pi}{2} + j\pi)^2 - c\right)v_{j}(t) \cos(\frac{\pi}{2} + j\pi)x \notag\\
    &\quad - \sum_{j=0}^{\infty} f_{j}(t) \cos(\frac{\pi}{2} + j\pi)x.
\end{align}
Hence, we have
\begin{equation}
    \dot{v}_{j}(t) = \left(-\varepsilon (\frac{\pi}{2} + j\pi)^2 - c\right)v_{j}(t) - f_{j}(t),
\end{equation} for each $j = 0,1,2,\ldots$, 
which is a linear ordinary differential equation, whose solution is
\begin{align}
    v_{j}(t) &= e^{(-\varepsilon (\frac{\pi}{2} + j\pi)^2 - c)t}\left[\theta_{j} - \int_{0}^{t} e^{(\varepsilon (\frac{\pi}{2} + j\pi)^2 + c)\tau} f_{j}(\tau)d\tau\right] \notag\\
    &=  - \frac{2(-1)^{j}}{\frac{\pi}{2} + j\pi} \int_{0}^{t} e^{-(\varepsilon (\frac{\pi}{2} + j\pi)^2 + c)(t-\tau)}(c\delta(\tau) + \dot{\delta}(\tau))d\tau \notag \\
    &\quad + e^{-(\varepsilon (\frac{\pi}{2} + j\pi)^2 + c)t}\theta_{j},
\end{align} for $\theta_{j}=v_j(0)$ to be determined later.
Thus, recalling \eqref{eq_6_4}, we obtain
\begin{align}\label{eq_6_12}
    &v(x,t) = \sum_{j=0}^{\infty} \cos(\frac{\pi}{2} + j\pi)x \times \left[ e^{-(\varepsilon (\frac{\pi}{2} + j\pi)^2 + c)t}\theta_{j} \right. \notag \\
    & \left. \quad- \frac{2(-1)^{j}}{\frac{\pi}{2} + j\pi} \int_{0}^{t} e^{-(\varepsilon (\frac{\pi}{2} + j\pi)^2 + c)(t-\tau)} (c\delta(\tau)+ \dot{\delta}(\tau))d\tau \right]. 
\end{align}
Considering $v(x,0) = u(x,0) - \delta(0)$, it is obtained from \eqref{eq_6_12} that
\begin{equation}
    v(x,0) = \sum_{j=0}^{\infty} \theta_{j} \cos(\frac{\pi}{2} + j\pi)x = w(x,0) - \delta(0).
\end{equation}
Clearly, $\theta_{j}$ here are the Fourier coefficients of $w(x,0) - \delta(0)$, that is,
\begin{equation}
    \theta_{j} = 2\int_{0}^{1} (w(x,0) - \delta(0)) \cos(\frac{\pi}{2} + j\pi)x dx.
\end{equation}
Finally, we obtain the solution of $w(x,t)$ as \eqref{eq_1_27}.

\subsection{Proof of \eqref{fact_1}} \label{appendix:proof_fact_1}
We prove \eqref{fact_1} by showing that the following equation
\begin{align} \label{S(tau)}
    & \quad \quad S(\tau) = \vartheta \left[ \operatorname{sign}(\vartheta) \vphantom{\sum_{j=0}^{\infty}} \right.\notag\\
    &\left. - 2\operatorname{sign}(\vartheta(y_1(0),0))\sum_{j=0}^{\infty}e^{-\varepsilon (\frac{\pi}{2} + j\pi)^2 \tau} \frac{(-1)^j}{\frac{\pi}{2} + j\pi}  \right] \geq 0,
\end{align} holds for all $\tau > 0$.
We propose the following claim that will be used in proving \eqref{S(tau)}.
\begin{clam} \label{Jacobi_theta_claim}
    For all $x > 0$, we have 
    \begin{equation}
        0<\sum_{j=0}^{\infty} e^{- (2j+1)^2 x} \frac{(-1)^j}{2j+1}<\pi/4.  
    \end{equation}
\end{clam}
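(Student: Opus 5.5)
Write $F(x)=\sum_{j\ge0}\frac{(-1)^j}{2j+1}e^{-(2j+1)^2x}$ for $x>0$. The plan is to get the upper bound from monotonicity in $x$ and the lower bound from a heat-equation (Jacobi theta) representation of $F'$. The series converges absolutely and uniformly on $[x_0,\infty)$ for every $x_0>0$, and so does its termwise derivative. Hence
\begin{equation*}
F'(x)=-\sum_{j\ge0}(-1)^j(2j+1)e^{-(2j+1)^2x}.
\end{equation*}
As $x\to0^+$, Abel's theorem applied to the Leibniz series $\sum (-1)^j/(2j+1)=\pi/4$ gives $F(0^+)=\pi/4$. As $x\to\infty$ we clearly have $F(x)\to0$.

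Both strict inequalities will follow once we show that $F$ is strictly decreasing on $(0,\infty)$, i.e.\ that
\begin{equation*}
\Phi(x):=\sum_{j\ge0}(-1)^j(2j+1)e^{-(2j+1)^2x}>0\quad\text{for all }x>0.
\end{equation*}
Indeed, then $F(x)<F(0^+)=\pi/4$, and $F(x)>\lim_{s\to\infty}F(s)=0$.

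To prove $\Phi>0$ we use the theta-function identity. Note that $\Phi(x)=-\tfrac12\partial_y\vartheta(y,x)\big|_{y=\pi/2}$ up to a positive factor, where $\vartheta(y,x)=\sum_{j}e^{-(2j+1)^2x}\cos((2j+1)y)$. Equivalently, $\Phi(x)=\frac12\sum_{k\in\mathbb Z}\chi(k)\,k\,e^{-k^2x}$, with $\chi$ the nontrivial character mod $4$. Poisson summation (the functional equation of the odd theta series attached to $\chi$, which is self-dual) gives
\begin{equation*}
\Phi(x)=\frac{\sqrt{\pi}}{8}\,x^{-3/2}\,\Phi\!\left(\frac{\pi^2}{16x}\right)\cdot c
\end{equation*}
with an explicit positive constant $c$. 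The precise normalization is to be checked via the Gaussian Fourier transform $\int k e^{-k^2x}e^{-2\pi i k\xi}dk$ and the Gauss sum of $\chi$, which equals $2i$.

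Consequently it suffices to prove $\Phi(x)>0$ for $x\ge x_*$, where $x_*=\pi/4$ is the fixed point of $x\mapsto \pi^2/(16x)$. For $x$ below $x_*$ the functional equation maps the problem to $\pi^2/(16x)>x_*$.

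For $x\ge\pi/4$ we argue that the first term dominates. We have $|(-1)^j(2j+1)e^{-(2j+1)^2x}|$ decreasing in $j$, since $(2j+3)/(2j+1)\le3<e^{8x}$. The series is therefore alternating with decreasing terms, so
\begin{equation*}
\Phi(x)\ge e^{-x}-3e^{-9x}=e^{-x}\bigl(1-3e^{-8x}\bigr)>0,
\end{equation*}
because $e^{8x}\ge e^{2\pi}>3$. In fact, the alternating-series argument alone already works for every $x>\ln3/8$. So the functional equation is only needed to handle small $x$.

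The main obstacle is getting the Poisson-summation identity with the correct positive constant. If that turns out to be delicate, a fallback is available. Write $\Phi$ as a positive multiple of $-\partial_y\vartheta_2$-type expressions and use the Jacobi triple product, $\vartheta_1'(0,q)=2q^{1/4}\prod_{n\ge 1}(1-q^{2n})^3>0$. Indeed, $\Phi(x)=\tfrac12 q^{-1/4}\sum(-1)^j(2j+1)q^{(j+1/2)^2}$ with $q=e^{-4x}$, which is exactly $\tfrac12 q^{-1/4}\vartheta_1'(0,q)=\prod(1-q^{2n})^3>0$. This gives positivity for all $x>0$ at once, and with it the claim.
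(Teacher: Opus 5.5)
Your argument is correct. Its skeleton is the same as the paper's (which works with $L=\pi/4-F$): strict decrease of $F$ through positivity of $\Phi=-F'$, together with $F(0^+)=\pi/4$ and $F(\infty)=0$. Your fallback is exactly the paper's key step. With $q=e^{-4x}$ one has $q^{(j+1/2)^2}=e^{-(2j+1)^2x}$, hence $\Phi(x)=\tfrac12\vartheta_1'(0,q)=q^{1/4}\prod_{n\ge1}(1-q^{2n})^3$. Your version carries a stray $q^{-1/4}$ and an extra factor $\tfrac12$, which is harmless since only the sign matters.

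Your primary route is genuinely different. It replaces Jacobi's product formula by the self-duality of the odd theta series attached to $\chi$ mod $4$, plus a two-term alternating estimate. The constant you left open is $c=\pi$, i.e.
\[
\Phi(x)=\frac{\pi^{3/2}}{8}\,x^{-3/2}\,\Phi\!\left(\frac{\pi^2}{16x}\right).
\]
Its positivity comes from the Gauss sum $2i$ times the factor $-i$ in the Fourier transform of $u\,e^{-u^2x}$; equivalently, $\Phi(x)=\eta(4ix/\pi)^3$ and $\eta(-1/\tau)=\sqrt{-i\tau}\,\eta(\tau)$. You do not even need to compute it. Take any identity $\Phi(x)=C\,x^{-3/2}\Phi(\pi^2/(16x))$ and evaluate it at the fixed point $x=\pi/4$, where your alternating bound already gives $\Phi(\pi/4)>0$. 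This forces $C(\pi/4)^{-3/2}=1$, so what you flag as the main obstacle disappears.

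As for the trade-off: the paper's route is a one-liner that yields an explicit positive product for all $x>0$ at once, but it rests on the triple product identity. Yours needs only Poisson summation and an elementary bound valid for $x>\ln 3/8$. It also gives the small-$x$ asymptotics $\Phi(x)\sim\frac{\pi^{3/2}}{8}x^{-3/2}e^{-\pi^2/(16x)}$.
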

\begin{proof}
    We now define a function $L(x)$ as
    \begin{equation}
        L(x) = \pi/4 - \sum_{j=0}^{\infty} \frac{(-1)^j}{2j+1}e^{- (2j+1)^2 x}, 
    \end{equation}
    where 
    \begin{align}
        L(0) &= \pi/4 - \sum_{j=0}^{\infty} \frac{(-1)^j}{2j+1} 
        = \pi/4 - \arctan(1) = 0.
    \end{align}
    Next, we will prove that $0<L(x) < \frac{\pi}{4}$ for all $x > 0$. First, we differentiate $L(x)$ with respect to $x$.
    \begin{equation}\label{eq:dL}
        \dot{L}(x) = \sum_{j=0}^{\infty} (-1)^j (2j+1) e^{-(2j+1)^2 x}. 
    \end{equation}
    Then we introduce a type of Jacobi theta function $\vartheta_{1}(z,q)$, which is defined in \cite[Ch.21]{whittaker2020course} as follows:
    \begin{equation}\label{theta_1}
        \vartheta_{1}(z,q) = 2\sum_{j=0}^{\infty} (-1)^j q^{(j+\frac{1}{2})^2} \sin((2j+1)z).
    \end{equation}
 Differentiating \eqref{theta_1} with respect to $z$, we have
    \begin{equation}
        \frac{\partial \vartheta_{1}(z,q)}{\partial z} = 2\sum_{j=0}^{\infty} (-1)^j (2j+1) q^{(j+\frac{1}{2})^2} \cos((2j+1)z).
    \end{equation}
    Setting $z = 0$ and $q = e^{-4x}$, we obtain
    \begin{equation}
        \frac{\partial \vartheta_{1}(0,e^{-4x})}{\partial z} = 2\sum_{j=0}^{\infty} (-1)^j (2j+1) e^{-(2j+1)^2 x}.
    \end{equation}
    Comparing with \eqref{eq:dL}, we know $\dot{L}(x) = \frac{1}{2}\frac{\partial \vartheta_{1}(0,e^{-4x})}{\partial z}$. According to \cite[Ch.21]{whittaker2020course}, we know that 
    \begin{equation}
        \frac{\partial \vartheta_{1}(0,e^{-4x})}{\partial z} = 2 e^{-x} \prod_{j=1}^{\infty} (1 - e^{-8jx}) \prod_{j=1}^{\infty} (1 - e^{-8jx})^2.
    \end{equation}
    Since $x > 0$, we can easily obtain that $\frac{\partial \vartheta_{1}(0,e^{-4x})}{\partial z} > 0$. Therefore, we have $\dot{L}(x) > 0$ for all $x > 0$. Noticing that $L(0) = 0$ and $\lim_{x\to+\infty}L(x) = \frac{\pi}{4}$, we can conclude that $0<L(x) <\frac{\pi}{4}$ for all $x > 0$. The proof of the claim is complete.
    \end{proof}
    Now, we continue to prove \eqref{S(tau)}. We consider two cases:
    \begin{enumerate}
        \item $\vartheta(y_1(0),0) > 0$. In this case, we have
        \begin{equation}
            S(\tau) = \vartheta \left[ \operatorname{sign}(\vartheta) - 2\sum_{j=0}^{\infty}e^{-\varepsilon (\frac{\pi}{2} + j\pi)^2 \tau} \frac{(-1)^j}{\frac{\pi}{2} + j\pi}  \right].
        \end{equation}
        According to Claim \ref{Jacobi_theta_claim}, we have
        \begin{align}
            S(\tau) &= \vartheta \left[ 1 - \frac{4}{\pi} \sum_{j=0}^{\infty}e^{-\varepsilon (\frac{\pi}{2} + j\pi)^2 \tau} \frac{(-1)^j}{2j+1}  \right] \notag\\
            &= \frac{4\vartheta}{\pi} \left[ \frac{\pi}{4} - \sum_{j=0}^{\infty}e^{-\varepsilon \frac{\pi^2}{4}\tau(1 + 2j)^2} \frac{(-1)^j}{2j+1} \right]\notag\\
            & > 0, 
        \end{align} for $\vartheta > 0$. 
        If $\vartheta < 0$, we have
        \begin{align}
            S(\tau) &= \vartheta \left[ -1 - \frac{4}{\pi} \sum_{j=0}^{\infty}e^{-\varepsilon (\frac{\pi}{2} + j\pi)^2 \tau} \frac{(-1)^j}{2j+1}  \right] \notag\\
            &= -\frac{4\vartheta}{\pi} \left[ \frac{\pi}{4} + \sum_{j=0}^{\infty}e^{-\varepsilon \frac{\pi^2}{4}\tau(1 + 2j)^2} \frac{(-1)^j}{2j+1} \right]\notag\\
            & > 0, 
        \end{align}
        because of $\sum_{j=0}^{\infty}e^{-\varepsilon \frac{\pi^2}{4}\tau(1 + 2j)^2} \frac{(-1)^j}{2j+1} > 0$ according to Claim \ref{Jacobi_theta_claim}.
        \item $\vartheta(y_1(0),0) < 0$. In this case, we have
        \begin{equation}
            S(\tau) = \vartheta \left[ \operatorname{sign}(\vartheta) + 2\sum_{j=0}^{\infty}e^{-\varepsilon (\frac{\pi}{2} + j\pi)^2 \tau} \frac{(-1)^j}{\frac{\pi}{2} + j\pi}  \right].
        \end{equation}
        Similarily like the above case 1, we can directly obtain that $S(\tau) \geq 0$ for all $\tau > 0$.

    \end{enumerate}
Inequality \eqref{S(tau)} is thus obtained.

\subsection{Proof of $h_n(\underline{z}_n(t),t) > 0$ for all $t \geq 0$} \label{appendix:proof_h_n_positive}

Recalling the expression of $h_n(\underline{z}_n(t),t)$ in \eqref{h_n_2}, we can obtain that
\begin{align} \label{ew(0,tau)}
    &e^{\kappa_n \tau}\vartheta w(0,\tau)  = e^{(\kappa_n-c)\tau}\vartheta \left[ \operatorname{sign}(\vartheta)M \vphantom{\sum_{j=0}^{\infty}} \right. \notag\\
    &\left.- 2\operatorname{sign}(\vartheta(y_1(0),0))M \sum_{j=0}^{+\infty} e^{-\varepsilon(\frac{\pi}{2}+j\pi)^{2}\tau} \frac{(-1)^{j}}{\frac{\pi}{2}+j\pi} \right. \notag \\
    & \left. + 2\sum_{j=0}^{+\infty} e^{-\varepsilon(\frac{\pi}{2}+j\pi)^{2}\tau} \acute{\theta}_j \right],
\end{align}
for $\vartheta >0$, we rewrite the above expression as
\begin{align}
    &e^{\kappa_n \tau}\vartheta w(0,\tau)  = e^{(\kappa_n-c)\tau}\vartheta \left[ M \vphantom{\sum_{j=0}^{\infty}} \right. \notag\\
    &\left.- 2\operatorname{sign}(\vartheta(y_1(0),0))M \sum_{j=0}^{+\infty} e^{-\varepsilon(\frac{\pi}{2}+j\pi)^{2}\tau} \frac{(-1)^{j}}{\frac{\pi}{2}+j\pi} \right. \notag \\
    & \left. + 2\sum_{j=0}^{+\infty} e^{-\varepsilon(\frac{\pi}{2}+j\pi)^{2}\tau} \acute{\theta}_j \right].
\end{align}
Considering \eqref{M} and Claim \ref{Jacobi_theta_claim}, we know
\begin{equation} \label{M_1}
    M > \frac{\underset{t\geq t_{\rm M}}{\sup}\left| 2\sum_{j=0}^{+\infty} e^{-\varepsilon(\frac{\pi}{2}+j\pi)^{2}t} \acute{\theta}_j\right|}{1 + 2\sum_{j=0}^{+\infty} e^{-\varepsilon(\frac{\pi}{2}+j\pi)^{2}t_{\rm M}} \frac{(-1)^{j}}{\frac{\pi}{2}+j\pi}}.
\end{equation}
So we have
\begin{equation} \label{M_sign}
    M > \frac{\underset{t\geq t_{\rm M}}{\sup}\left| 2\sum_{j=0}^{+\infty} e^{-\varepsilon(\frac{\pi}{2}+j\pi)^{2}t} \acute{\theta}_j\right|}{1 - 2\operatorname{sign}(\vartheta(y_1(0),0))\sum_{j=0}^{+\infty} e^{-\varepsilon(\frac{\pi}{2}+j\pi)^{2}t_{\rm M}} \frac{(-1)^{j}}{\frac{\pi}{2}+j\pi}}.
\end{equation}
Inserting \eqref{M_sign} into \eqref{ew(0,tau)}, we obtain that $e^{\kappa_n \tau}\vartheta w(0,\tau) > 0$ for all $t \geq t_{\rm M}$. 

Similarly, for $\vartheta < 0$, we rewrite the expression \eqref{ew(0,tau)} as
\begin{align}\label{eq:ee}
    &e^{\kappa_n \tau}\vartheta w(0,\tau)  = e^{(\kappa_n-c)\tau}(-\vartheta) \left[ M \vphantom{\sum_{j=0}^{\infty}} \right. \notag\\
    &\left.+ 2\operatorname{sign}(\vartheta(y_1(0),0))M \sum_{j=0}^{+\infty} e^{-\varepsilon(\frac{\pi}{2}+j\pi)^{2}\tau} \frac{(-1)^{j}}{\frac{\pi}{2}+j\pi} \right. \notag \\
    & \left. - 2\sum_{j=0}^{+\infty} e^{-\varepsilon(\frac{\pi}{2}+j\pi)^{2}\tau} \acute{\theta}_j \right].
\end{align}
Considering \eqref{M} and Claim \ref{Jacobi_theta_claim}, we have
\begin{equation} \label{eq:M2e}
    M > \frac{\underset{t\geq t_{\rm M}}{\sup}\left| 2\sum_{j=0}^{+\infty} e^{-\varepsilon(\frac{\pi}{2}+j\pi)^{2}t} \acute{\theta}_j\right|}{1 + 2\operatorname{sign}(\vartheta(y_1(0),0))\sum_{j=0}^{+\infty} e^{-\varepsilon(\frac{\pi}{2}+j\pi)^{2}t_{\rm M}} \frac{(-1)^{j}}{\frac{\pi}{2}+j\pi}}.
\end{equation}
Inserting \eqref{eq:M2e} into \eqref{eq:ee}, we have $e^{\kappa_n \tau}\vartheta w(0,\tau) > 0$ for all $t \geq t_{\rm M}$.

Overall, we get $e^{\kappa_n \tau}\vartheta w(0,\tau) > 0$ for all $t \geq t_{\rm M}$. Then considering \eqref{h_n_2} and the fact that $h_n(\underline{z}_n(0),0) > 0$ and $h_n(\underline{z}_n(t),t) > 0$ for all $t \in [0, t_{\rm M}]$, we can obtain that $h_n(\underline{z}_n(t),t) > 0$ for all $t \geq 0$.

\subsection{Proof of the convergence of the series in \eqref{M}} \label{appendix:proof_series_convergence}
We prove that the following series
\begin{align}
    &\sum_{j=0}^{+\infty} e^{-\varepsilon(\frac{\pi}{2}+j\pi)^{2}t} \acute{\theta}_j, \label{series1} \\
    &\sum_{j=0}^{+\infty} e^{-\varepsilon(\frac{\pi}{2}+j\pi)^{2}t} \frac{(-1)^{j}}{\frac{\pi}{2}+j\pi} \label{series2}
\end{align} converge for all $t > 0$.

\begin{proof}
    Considering \eqref{acute_theta_j}, the series of \eqref{series1}, and applying Cauchy-Schwarz Inequality, we have
    \begin{align}
        &\sum_{j=0}^{+\infty} e^{-\varepsilon(\frac{\pi}{2}+j\pi)^{2}t} |\acute{\theta}_j| \notag \\
        & \leq \sum_{j=0}^{+\infty} e^{-\varepsilon(\frac{\pi}{2}+j\pi)^{2}t} \sqrt{\frac{1}{2} \int_{0}^{1} w(x,0)^2 dx} \notag \\
        & \leq \sqrt{\frac{1}{2} \int_{0}^{1} w(x,0)^2 dx} \sum_{j=0}^{+\infty} e^{-\varepsilon(\frac{\pi}{2}+j\pi)^{2}t},
    \end{align} for any $t > 0$.
    Since $\sum_{j=0}^{+\infty} e^{-\varepsilon(\frac{\pi}{2}+j\pi)^{2}t}$ is a convergent series for all $t > 0$, we can conclude that the series in \eqref{series1} is absolutely convergent for all $t > 0$. 

    Next, we consider the alternating series in \eqref{series2}. For any $t > 0$, we have that $e^{-\varepsilon(\frac{\pi}{2}+j\pi)^{2}t} \frac{1}{\frac{\pi}{2}+j\pi}$ is monotonically decreasing as $j$ increases and approaches zero as $j \to +\infty$. Therefore, according to Leibniz criterion for alternating series, we can conclude that the series in \eqref{series2} is absolutely convergent for all $t > 0$. The proof is complete.
\end{proof}

\bibliographystyle{plain}        
\bibliography{reference}         

\end{document}